\documentclass[12pt]{amsart}
\usepackage{amssymb,amsxtra,amsmath,amsfonts,bm,hyperref}
\allowdisplaybreaks
\newcommand{\ds}{\displaystyle}
\newcommand{\p}[1]{\partial_{#1}}
\newcommand{\Z}{\mathbb{Z}}

\newcommand{\C}{\mathbb{C}}
\newcommand{\U}{\mathcal{U}}

\newcommand{\W}{\bar{W}}
\newcommand{\w}{\bar{w}}
\newcommand{\psibar}{\bar{\psi}}
\newcommand{\chibar}{\bar{\chi}}
\newcommand{\inv}{^{-1}}

\newcommand{\bt}{\bm{t}}
\newcommand{\bx}{\bm{x}}
\newcommand{\ba}{\bm{a}}
\newcommand{\bb}{\bm{b}}
\newcommand{\La}{\Lambda}
\newcommand{\dz}{\displaystyle\frac{dz}{2\pi\ii}}
\newtheorem{theorem}{Theorem}
\newtheorem{lemma}[theorem]{Lemma}
\newtheorem{proposition}[theorem]{Proposition}

\theoremstyle{definition}
\newtheorem{definition}[theorem]{Definition}

\theoremstyle{remark}
\newtheorem{remark}[theorem]{Remark}

\numberwithin{equation}{section}
\numberwithin{theorem}{section}

\newcommand\thref{Theorem \ref}
\newcommand\leref{Lemma \ref}
\newcommand\prref{Proposition \ref}

\newcommand\reref{Remark \ref}
\newcommand\seref{Section \ref}

\DeclareMathOperator\Wr{Wr}
\def\A{\mathcal{A}}

\def\ps{\partial_s}

\def\fin{\mathrm{fin}}
\def\x{\boldsymbol{x}}
\def\t{\boldsymbol{t}}
\def\bt{\bar{\t}}
\def\CC{\mathbb{C}}
\def\ii{\mathrm{i}} 

\def\la{\lambda}
\def\La{\Lambda}

\def\d{\partial}

\begin{document}

\title[Darboux transformations and Fay identities]{Darboux transformations and Fay identities for the extended bigraded Toda hierarchy}
\author{Bojko Bakalov}
\author{Anila Yadavalli}
\address{Department of Mathematics,
North Carolina State University,
Raleigh, NC 27695, USA}
\email{bojko\_bakalov@ncsu.edu; ayadava@ncsu.edu}

\thanks{The first author is supported in part by Simons Foundation grants 279074 and 584741}

\date{December 3, 2018}

\keywords{Darboux transformation; extended bigraded Toda hierarchy; Fay identities; Lax operator; tau-function; wave function; wave operator}

\subjclass[2010]{Primary 37K35; Secondary 37K10, 53D45}

\begin{abstract}
The extended bigraded Toda hierarchy (EBTH) is an integrable system satisfied by the Gromov--Witten total descendant potential of $\mathbb{CP}^1$ with two orbifold points. We write a bilinear equation for the tau-function of the EBTH and derive Fay identities from it. We show that the action of Darboux transformations on the tau-function is given by vertex operators. As a consequence, we obtain generalized Fay identities. 
\end{abstract}

\maketitle


\section{Introduction}\label{s0}

The \emph{extended Toda hierarchy} (ETH) was originally introduced by E. Getzler \cite{Ge01} and Y. Zhang \cite{Z02} in its bihamiltonian form, and later in its Lax form by G. Carlet, B. Dubrovin and Y. Zhang \cite{CDZ04}. It is obtained by adding an extra set of commuting flows to the 1D Toda hierarchy, which are given in terms of a ``logarithm" of the Lax operator. 
It was shown in \cite{Ge01,DZ04,Mil06,OP06} that the Gromov--Witten total descendant potential of $\mathbb{CP}^1$ is a tau-function of the ETH.

The \emph{extended bigraded Toda hierarchy} (EBTH) was introduced by G. Carlet \cite{Car06} as a generalization
of the extended Toda hierarchy related to the Frobenius manifolds from \cite{DZ98}. The EBTH is defined
for every pair $(k,m)$ of positive integers, and it coincides with the ETH for $k=m=1$. 
The total descendant potential of $\mathbb{CP}^1$
with two orbifold points of orders $k$ and $m$ is a tau-function of the EBTH (see \cite{MT08,CvdL13}).
The EBTH contains the bigraded Toda hierarchy, which is a reduction of the 2D Toda hierarchy
(see \cite{Tak10,UT84}). 

In this paper, we investigate how Darboux transformations of the EBTH affect the tau-function. Let us recall the B\"acklund--Darboux transformations and Fay identities 
of the KP hierarchy, following \cite{AvM94, BHY296, MZ97}. 
A \emph{B\"acklund--Darboux transformation} maps the Lax operator $L$ to a new pseudo-differential operator $\tilde{L} = A_{\Phi}LA_{\Phi}\inv$ and the wave function $\Psi$ to a new wave function $\tilde{\Psi} = A_{\Phi}\Psi$, where 
$A_{\Phi}\Psi = \Wr(\Phi, \Psi) / \Phi$ and $\Wr$ is the Wronskian. Then we have $\tilde{L}\tilde{\Psi} = z\tilde{\Psi}$, and if $\Psi$ is a wave function for the KP hierarchy, then $A_{\Psi(t,z_1)}\Psi$ is as well. In this case, the tau-function $\tilde{\tau}$ corresponding to the new solution is given by $\tilde{\tau}= X(t,z_1)\tau$, where 
\begin{equation*}
X(t,z) = \exp\Bigl(\sum_{j=1}^{\infty} t_jz^j\Bigr)\exp\Bigl(-\sum_{j=1}^{\infty} \frac{\partial_{t_j}}{j}z^{-j}\Bigr)
\end{equation*}
is the so-called vertex operator.
The proof of this theorem relies on the differential Fay identity, which is obtained by making a certain substitution in the bilinear equation for the KP hierarchy (see \cite[Lemma 2.1]{AvM94}). As a consequence of this result, one gets the generalized differential \emph{Fay identities} for the KP hierarchy:
\begin{align*}
\Wr &\bigl(\Psi(t,z_1), \dots, \Psi(t,z_N)\bigr) = \prod_{1 \leq i < j \leq N} (z_j - z_i ) \\
&\times \exp\Bigl({\sum_{i=1}^\infty t_i(z_1^{i} + \cdots + z_N^{i})}\Bigr) \, \frac{\tau(t-[z_1\inv] - \cdots - [z_N\inv])}{\tau(t)} \,,
\end{align*}
where $[z\inv] = (z\inv,z^{-2}/2,z^{-3}/3,\dots)$.
Conversely, K. Takasaki and T. Takebe \cite{TT95} showed that the Fay identities of the KP hierarchy 
imply the bilinear equation
of the hierarchy. L.P. Teo proved that the same is true for the Fay identities of the 2D Toda hierarchy \cite{Teo06}.

In this paper, we use the approach of K. Takasaki \cite{Tak10} to derive a \emph{bilinear equation} for the EBTH, which is 
equivalent to the one from \cite{CvdL13} after a change of variables. From this we obtain a difference Fay identity, similar to what was done in \cite{Tak08, Teo06} for the the 2D Toda hierarchy. Some Fay identities for the EBTH were given in \cite{LH10}, but writing the Fay identity in our notation allows us to study the action of Darboux transformations on the tau-function. In \cite{Car03}, G. Carlet defined \emph{Darboux transformations} on the wave functions for the ETH, and in \cite{LS16}, C. Li and T. Song generalized them to the EBTH. 

Our main result is that the action of Darboux transformations on the tau-function is given by applying the vertex operators 
\begin{align*}
\Gamma_{+}(z) &= e^{-\partial_s}\exp\Bigl(\ds\sum_{j=1}^\infty t_j z^j\Bigr)
\exp\Bigl(-\ds\sum_{j=1}^\infty \ds\frac{\partial_{t_j}}{j}z^{-j}\Bigr)
= e^{-\partial_s}X(t,z)
\end{align*}
and
\begin{equation*}
\Gamma_-(z) = z^se^{\p{s}}\exp\Bigl(-\ds\sum_{j=1}^\infty \bar{t}_j z^{-j}\Bigr)
\exp\Bigl(\sum_{j=1}^\infty\ds\frac{\partial_{\bar{t}_j}}{j}z^j\Bigr).
\end{equation*}	
Thus, new tau-functions for the EBTH can be obtained from existing ones by applying a product of $\Gamma_+$ and $\Gamma_-$ evaluated at different values $z_i \in \C^*$. As an application, we derive \emph{generalized Fay identities} for the EBTH. 

Now let us summarize the contents of the present paper.
In \seref{s1}, we start by reviewing difference operators and the extended bigraded Toda hierarchy (EBTH) following the approach of K. Takasaki \cite{Tak10}. Our version of the EBTH is related to the original definition of G. Carlet \cite{Car06} (or to \cite{CvdL13})
by an explicit change of variables, and we believe it is more convenient.
We discuss the Lax operator $L$, the wave operators $W$, $\bar W$, wave functions $\psi$, $\bar\psi$, and tau-function $\tau$ of the EBTH.

In \seref{s2}, we give an explicit bilinear equation for the EBTH that is equivalent to the one from \cite{CvdL13}, in the notation introduced by Takasaki. We provide a shorter proof than what was done in \cite{CvdL13}. From the bilinear equation written in this form, we get two difference Fay identities satisfied by the tau-function of the EBTH (cf.\ \cite{Tak08}). This is similar to what was done in \cite{LH10}, but we are following Takasaki's notation.

In \seref{s3}, we review the Darboux transformations on $L$ and $\psi$ from \cite{LS16}. We show that the action of the Darboux transformations on the tau-function is given by the vertex operators $\Gamma_+(z)$ and $\Gamma_-(z)$. This result is new even in the case $k = m = 1$ corresponding to the extended Toda hierarchy. We use it to conclude that new tau-functions can be found by acting on an existing tau-function $\tau$ with a product of $\Gamma_+(z_i)$ and $\Gamma_-(z_i)$ for certain $z_i \in \C^*$. 

In \seref{s4}, we apply a sequence of $N$ Darboux transformations and the vertex operators $\Gamma_+(z_i)$, $\Gamma_-(z_i)$ to derive generalized difference Fay identities for the EBTH.

Finally, \seref{s5} contains concluding remarks and open questions.

\section{Review of the extended bigraded Toda hierarchy}\label{s1}
This section is a quick review of the EBTH following \cite{BW16}. We first discuss the spaces of difference and differential-difference operators. Then we present a definition of the EBTH, its Lax operator, wave operators, wave functions, and tau-function.
\subsection{Spaces of difference and differential-difference operators}\label{sdiffo}
Consider functions of a variable $s$, and the shift operator $\La = e^{\p{s}}$ defined by $(\Lambda f)(s)	=	f(s+1)$.
The space $\A$ of (formal) \emph{difference operators} consists of all expressions of the form
\begin{equation*}
A=\sum_{i\in\Z}a_i(s)\Lambda^i.
\end{equation*}
We have $\A=\A_+\oplus\A_-$ where $\A_+$ (respectively, $\A_-$) consists of $A\in\A$ such that $a_i=0$ for all $i<0$
(respectively, $i\ge0$). For $A\in\A$, we define its projections
\begin{equation*}
A_+	=\sum_{i\geq 0}a_i(s)\Lambda^i \in\A_+, \qquad
A_-	=	\sum_{i<0}a_i(s)\Lambda^i \in\A_-.
\end{equation*}

We let $\A_{++}$ be the space of difference operators $A\in\A$ such that $a_i=0$ for $i\ll0$ (i.e., the powers of $\La$ are bounded from below),
and $\A_{--}$ be the space of $A\in\A$ such that $a_i=0$ for $i\gg0$ (i.e., the powers of $\La$ are bounded from above).
Both $\A_{++}$ and $\A_{--}$ are associative algebras, where the product is defined by linearity and
\begin{equation*}
(a(s)\La^i )(b(s)\La^j) = a(s)b(s+i) \La^{i+j}.
\end{equation*}
Let $\A_\fin=\A_{++}\cap\A_{--}$.
The product of a difference operator $A \in \A$ by an element of $\A_\fin$ is defined, but in general, the product of an element of $\A_{++}$ and an element of $\A_{--}$ is not well defined.

We will also consider the space $\A[\ps]$ of (formal) \emph{differential-difference operators}, where $\La\ps=\ps\La$.
Note that such operators depend polynomially on $\ps$.
Again, there is a splitting $\A[\ps]=\A_+[\ps]\oplus\A_-[\ps]$, and we have the associative algebras
$\A_{++}[\ps]$ and $\A_{--}[\ps]$, where the product is defined by linearity and
\begin{equation*}
(a(s)\La^i \ps^n)(b(s)\La^j \ps^m) = \sum_{k=0}^n \binom{n}{k} a(s) \frac{\partial^k b}{\partial s^k}(s+i) \La^{i+j} \ps^{m+n-k}.
\end{equation*}
Differential-difference operators can be applied to $z^s$ so that
\begin{equation*}
(a(s)\La^i \ps^n) z^s = a(s) z^i (\log z)^n  z^{s}.
\end{equation*}
\subsection{The extended bigraded Toda hierarchy}\label{sebth}
For fixed, positive integers $k$ and $m$,
consider a \emph{Lax operator} of the form
		\begin{equation*}
			L	=	\Lambda^k + u_{k-1}(s)\Lambda^{k-1} + \cdots + u_{- m}(s)\Lambda^{- m} \in\A_\fin, \qquad u_{- m}(s)\neq 0.
		\end{equation*}
There exist \emph{wave operators} (also called dressing operators):
\begin{equation}\label{wavop}
\begin{split}
W	&=	1 + \sum_{i=1}^\infty w_i(s)\Lambda^{- i} \in 1+\A_- \subset \A_{--}, \\
 \W	&=	\sum_{i=0}^\infty \bar{w}_i(s)\Lambda^i \in\A_+, \qquad \bar{w}_0(s)	\neq	0 ,
\end{split}
\end{equation}
such that
		\begin{equation}
			\label{BTHLaxdress}
			L	=	W\Lambda^kW\inv	=	\W\Lambda^{- m}\W\inv.
		\end{equation}
This allows us to define fractional powers of $L$ for any integer $n$ by
		\begin{equation}
			\label{EBTHfracL}
			L^{\frac{n}{k}}	=	W\Lambda^n W\inv	\in\A_{--},	\qquad	
L^{\frac{n}{m}}	=	\W\Lambda^{-n}\W\inv	\in\A_{++},
		\end{equation}
which commute with $L$ and satisfy
\begin{equation*}
\bigl(L^{\frac{n}{k}}\bigr)^k	=	\bigl( L^{\frac{n}{m}} \bigr)^m	=	L^n, \qquad n\in\Z_{\ge0}.
\end{equation*}
However, observe that $L^{\frac{n}{k}} \neq L^{\frac{p}{m}}$, unless $\frac{n}{k}=\frac{p}{m}\in\Z_{\ge0}$.
We define $\log L\in\A$ by
\begin{equation*}
			\log L	=	\frac{1}{2}W\p{s}W\inv - \frac{1}{2}\W\p{s}\W\inv	=	-\frac{1}{2}\frac{\partial W}{\partial s}W\inv + \frac{1}{2}\frac{\partial \W}{\partial s}\W\inv.
\end{equation*}
Then $\log L$ commutes with all $L^n$ for $n\in\Z_{\ge0}$,
but the composition of $\log L$ with a fractional power of $L$ is not well defined in general.

\begin{definition}[\cite{Car06}]\label{EBTHDef}
			The \emph{extended bigraded Toda hierarchy} (abbreviated EBTH) in Lax form is given by:
\begin{equation}\label{EBTHLaxflows}
			\begin{aligned}
\p{t_n} L &= [(L^{\frac{n}{k}})_+,L]	
, &	\qquad & n\geq 1, \\
\p{\bar t_n} L &= [(L^{\frac{n}{m}})_+,L]	
, &	\qquad & n\geq 1, \\
\p{x_n}	L &=	[(2L^n\log L)_+,L]	
,	&\qquad & n\geq 0.
			\end{aligned}
\end{equation}
		\end{definition}

The first two equations in \eqref{EBTHLaxflows} describe the \emph{bigraded Toda hierarchy},
which is a reduction of the 2D Toda hierarchy (see \cite{Tak10,UT84}). 
For $k=m=1$, the EBTH is equivalent to the \emph{extended Toda hierarchy} (ETH) \cite{CDZ04, Tak10}.

The flows of the EBTH induce flows on the dressing operators:
\begin{equation}\label{EBTHdresflows}
		\begin{aligned}
\p{t_n} W	&=	-(L^{\frac{n}{k}})_-W,	& \qquad
\p{t_n} \W &=	(L^{\frac{n}{k}})_+\W,\\
\p{\bar{t}_n} W	&=	-(L^{\frac{n}{m}})_-W,	& \qquad \p{\bar{t}_n} \W	&=	(L^{\frac{n}{m}})_+\W, \\
			\p{x_n} W	&=	-(2L^n\log L)_-W,	& \qquad \p{x_n} \W	&=	(2L^n\log L)_+\W.
		\end{aligned}
\end{equation}

\begin{remark}\label{rebth2}
Since $\d_{x_0}-\d_s$ and $\p{t_{nk}}-\p{\bar{t}_{nm}}$ act trivially on $L$, $W$ and $\W$, it follows that 
$L$, $W$ and $\W$ depend on $x_0+s$ and $t_{nk}+\bar{t}_{nm}$ for $n\geq 1$. Without loss of generality, we can assume 
$x_0=s$ and $t_{nk}=\bar{t}_{nm}$.
\end{remark}

\begin{remark}\label{rebth1}
To compare our version of the EBTH to the one from \cite{CvdL13}, we need to change there $\epsilon \mapsto-\epsilon$, which leads to 
$\Lambda \mapsto\Lambda\inv$ and $\zeta\mapsto\zeta\inv$ ($z$ here), and then apply the 
following change of variables:
		\begin{align*}
				x								&=	\epsilon s,\\
				q^{k-\alpha}_n	&=	\epsilon k\Bigl( n + \frac{\alpha}{k} \Bigr)_{n+1}t_{nk+\alpha}, \quad \alpha=1,2,\ldots,k-1,\\
				q^{k+\beta}_n		&=	\epsilon m\Bigl( n + \frac{\beta}{m} \Bigr)_{n+1}\bar{t}_{nm+\beta}, \quad \beta=1,2,\ldots,m-1,\\
				q^{k+m}_n				&=	\epsilon m(n+1)!\Bigl( 
t_{(n+1)k}+\bar t_{(n+1)m} + c_{n+1}\Bigl(\frac{1}{k} + \frac{1}{m}\Bigr)x_{k+1} \Bigr),\\
				q^k_n						&=	\epsilon n! \, x_n, \qquad n\geq 0.
		\end{align*}
Here $c_n$ are the harmonic numbers
\begin{equation*}
			c_0	=	0,	\quad c_k	=	1 + \frac{1}{2} + \frac{1}{3} + \cdots + \frac{1}{k}\,,
\end{equation*}
and 
$(p)_n$ denotes the Pochhammer symbol,
		\begin{align*}
			(p)_0								&=	1,\\
			(p)_n						&=	\prod_{i=1}^n (p-i+1), \qquad n\geq 1, \\
			(p)_{-n}	&=	\prod_{i=-n+1}^0 (p-i+1)\inv	=	\frac{1}{(p+n)_n} \,.
		\end{align*}
\end{remark}

Due to \reref{rebth2}, from now on we will always assume 
$x_0=s$ and $t_{nk}=\bar{t}_{nm}$ for $n\geq 1$. 
Introduce the notation
\begin{equation*}
\t=(t_1,t_2,\dots), \qquad \bt=(\bar t_1,\bar t_2,\dots), \qquad \x=(x_1,x_2,\dots),
\end{equation*}
and
\begin{equation*}
\xi(\t,z)=\sum_{i=1}^\infty t_iz^i, \qquad
\xi_k(\t,z)=\sum_{n=1}^\infty t_{nk}z^{nk} =\sum_{n=1}^\infty \bar{t}_{nm} z^{nk}.
\end{equation*}
Then
\begin{equation*}
\xi_m(\bt,z\inv)=\sum_{n=1}^\infty \bar{t}_{nm}z^{-nm} =\sum_{n=1}^\infty t_{nk} z^{-nm}.
\end{equation*}
We let
\begin{equation}\label{chi}
\begin{split}
			\chi	&=	z^{s + \xi(\x,z^k)}e^{\xi(\t,z) - \frac{1}{2}\xi_k(\t,z)},\\
			\bar{\chi}	&=	z^{s + \xi(\x,z^{-m})}e^{-\xi(\bt,z\inv) + \frac{1}{2}\xi_m(\bt,z\inv)}.
\end{split}
\end{equation}
Observe that, by definition,
\begin{equation}\label{chider}
\begin{split}
\p{t_i} \chi &= z^i \chi, \qquad \p{\bar{t}_j} \bar\chi = -z^{-j} \bar\chi \qquad\text{if}\quad k \nmid i, \;\; m \nmid j, \\
\p{t_{nk}} \chi &= \p{\bar{t}_{nm}} \chi = \frac12 z^{nk} \chi, \qquad
\p{t_{nk}} \bar\chi = \p{\bar{t}_{nm}} \bar\chi = -\frac12 z^{-nm} \bar\chi.
\end{split}
\end{equation}

The \emph{wave functions} $\psi$ and $\bar\psi$ of the EBTH are defined by:
\begin{equation}\label{EBTHwavef}
\begin{split}
				\psi	&=\psi(s,\t,\bt,\x,z)				=	W\chi	=	w\chi,\\
				\bar{\psi}	&=\bar\psi(s,\t,\bt,\x,z)	 =	\W\bar{\chi}	=	\bar{w}\bar{\chi},
\end{split}
\end{equation}
where
\begin{equation}\label{wavfn}
	\begin{split}
			w				
=	1	+	\sum_{i=1}^\infty w_i(s)z^{- i},\qquad
			\bar{w}	
=	\sum_{i=0}^\infty \bar{w}_i(s)z^i
\end{split}
\end{equation}
are the (left) symbols of $W$ and $\W$ respectively.	
Here we view $w$ and $\bar w$ as formal power series of $z\inv$ and $z$; however, in \seref{s3} below we will assume that $w(z)$ is convergent for $z$ in some domain $\U\subset\C$.

The wave functions satisfy
\begin{equation}\label{Lpsi}
L\psi=z^k\psi,  \qquad   L\bar\psi=z^{-m}\bar\psi.
\end{equation}
We have:
\begin{equation}\label{psiflow}
		\begin{aligned}
\p{t_n}{\psi} &=	(L^{\frac{n}{k}})_+{\psi}, & \qquad n&\in\Z_{\geq 1} \setminus k\Z,\\
\p{\bar t_n}{\psi} &= -(L^{\frac{n}{m}})_-{\psi}, & \qquad n&\in\Z_{\geq 1} \setminus m\Z,\\
\p{t_{nk}}{\psi} &= \p{\bar t_{nm}}{\psi} = A_n{\psi}, & \qquad n&\in\Z_{\geq 1}, \\
\p{x_n}{\psi} &=	(L^n\p{s} + P_n){\psi}, & \qquad n&\in\Z_{\geq 0},
		\end{aligned}
\end{equation}
and exactly the same equations hold for $\psibar$,
where 
\begin{equation}\label{An}
A_n =	\frac{1}{2}(L^n)_+ - \frac{1}{2}(L^n)_- 
					=	(L^n)_+ - \frac{1}{2}L^n\\
					=	\frac{1}{2}L^n - (L^n)_-
\end{equation}
and
\begin{equation}\label{Pn}
\begin{split}
				P_n	&=	-\Bigl(L^n\frac{\partial W}{\partial s}W\inv\Bigr)_+ - \Bigl(L^n\frac{\partial \W}{\partial s}\W\inv\Bigr)_-\\
						&=	L^nW\p{s}W\inv - (2L^n\log L)_- - L^n \p{s}\\
						&=	L^n\W\p{s}\W\inv + (2L^n\log L)_+ - L^n \p{s}.
\end{split}
\end{equation}
Observe that, due to \eqref{wavop} and \eqref{EBTHfracL}, we have
\begin{equation*}
(L^{\frac{n}{k}})_+ ,\; (L^{\frac{n}{m}})_- ,\; A_n,\; P_n \in\A_\fin, \qquad P_0=0.
\end{equation*}

Finally, by \cite{CvdL13, LH10}, there exists a \emph{tau-function} $\tau$ such that 
		\begin{align}
			\psi(s,\t,\bt,\bx,z) 
			&= \ds\frac{\tau(s,\t-[z\inv],\bt,\bx)}{\tau(s,\t,\bt,\bx)}\chi, \label{tau1}\\			
			\psibar(s,\t,\bt,\bx,z) 
			& = \ds\frac{\tau(s+1,\t,\bt+[z],\bx)}{\tau(s,\t,\bt,\bx)}\chibar \label{tau2},
		\end{align}
where
\begin{equation*}
[z\inv]=\Bigl(z\inv,\frac{z^{-2}}2,\frac{z^{-3}}3,\dots\Bigr), \qquad
[z]=\Bigl(z,\frac{z^2}2,\frac{z^3}3,\dots\Bigr).
\end{equation*}

\begin{remark}\label{rshift}
Since $t_{nk} = \bar{t}_{nm}$, we need to specify how to do the shifts $\t-[z\inv]$ in \eqref{tau1} and $\bt+[z]$ in \eqref{tau2}. Here and further, our convention is that in \eqref{tau1}, $\t-[z\inv]$ includes all variables $t_1,t_2,\dots$, while $\bt$ only includes 
$\bar{t}_i$ such that $m \nmid i$. Similarly, in \eqref{tau2}, all $\bar t_1,\bar t_2,\dots$ are shifted, while 
$\t$ only includes $t_i$ such that $k \nmid i$. 
\end{remark}

\section{Bilinear equation for the EBTH}\label{s2}
In this section, we derive a bilinear equation for the EBTH using Takasaki's approach from \cite{Tak10}. 
Our equation is equivalent to the bilinear equation from \cite{CvdL13}, and when $k = m =1$ it reduces to the 
bilinear equation for the ETH from \cite{Tak10}. As a consequence, we obtain
two difference Fay identities satisfied by tau-functions of the EBTH. 


\subsection{Dual wave functions}

Recall that the \emph{formal adjoint} of a difference operator $A = \sum_{i\in\Z} a_i(s)\Lambda^i \in \A$ is defined by 
\begin{equation*}
A^* = \sum_{i\in\Z} \Lambda^{-i} \circ a_i(s) 
= \sum_{i\in\Z} a_i(s-i) \Lambda^{-i}. 
\end{equation*}
It has the properties:
\begin{equation*}
(AB)^* = B^*A^*, \qquad (A^*)^*=A,  \qquad (A\inv)^* = (A^*)\inv. 
\end{equation*}

For given wave operators $W$ and $\bar W$, we define the \emph{dual wave functions} $\psi^*$ and $\bar\psi^*$ by:
\begin{equation}\label{wavef*}
	\begin{aligned}
		\psi^*&=(W^*)^{-1} \chi\inv = (W^*)^{-1} z^{-s-\xi(\bx,z^k)}e^{-\xi(\bt,z) + \frac12\xi_k(\t,z)},\\
		\bar{\psi}^* &= (\bar{W}^*)^{-1} \bar\chi\inv 
		= (\bar{W}^*)^{-1} z^{-s-\xi(\bx,z^{-m})}e^{\xi(\bt,z^{-1})-\frac12\xi_m(\bt,z^{-1})}.
	\end{aligned}
\end{equation}
If $W$ and $\bar W$ satisfy \eqref{BTHLaxdress}, \eqref{EBTHdresflows}, then it is easy to derive equations satisfied by $\psi^*$ and $\bar\psi^*$. For example, we have (cf.\ \eqref{Lpsi}):
\begin{equation}\label{Lpsi*}
L^*\psi^*=z^k\psi^*,  \qquad   L^*\bar\psi^*=z^{-m}\bar\psi^*.
\end{equation}
We will not list all the other equations, which are similar to \eqref{psiflow}, but we will need the following lemma.

\begin{lemma}\label{lpsi*eqs}
For every solution of the EBTH, the dual wave functions satisfy
\begin{align*}
\bigl( \p{x_n}-z^{nk}\p{s} \bigr) \psi^* &= -P_n^* \psi^*, \\
\bigl( \p{x_n}-z^{-nm}\p{s} \bigr) \bar\psi^* &= -P_n^* \bar\psi^*, 
\end{align*}
for all\/ $n \in \Z_{\geq 1}$, where\/ $P_n$ is given by \eqref{Pn}.
\end{lemma}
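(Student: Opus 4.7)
The plan is to differentiate $\psi^* = (W^*)^{-1}\chi^{-1}$ with respect to $x_n$ using the $x_n$-flow for $W$ from \eqref{EBTHdresflows}, and then recognize the claim via the definition of $P_n$ in \eqref{Pn}. Taking the formal adjoint of $\p{x_n}W = -(2L^n\log L)_-W$ and inverting yields $\p{x_n}(W^*)^{-1} = \bigl((2L^n\log L)_-\bigr)^*(W^*)^{-1}$, and hence
\begin{equation*}
\p{x_n}\psi^* = \bigl((2L^n\log L)_-\bigr)^*\psi^* + (W^*)^{-1}\p{x_n}\chi^{-1}.
\end{equation*}
To make sense of this I would extend the anti-involution $A\mapsto A^*$ from $\A$ to $\A[\ps]$ by declaring $\p{s}^* = -\p{s}$, the only choice compatible with $\p{s}\circ a(s) = a'(s) + a(s)\circ\p{s}$. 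Rewriting the second equality in \eqref{Pn} as $(2L^n\log L)_- = L^nW\p{s}W^{-1} - L^n\p{s} - P_n$ and taking adjoints then gives
\begin{equation*}
\bigl((2L^n\log L)_-\bigr)^* = -(W^*)^{-1}\p{s}W^*(L^*)^n + \p{s}(L^*)^n - P_n^*.
\end{equation*}

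Next I would apply these three operators to $\psi^*$, using $(L^*)^n\psi^* = z^{nk}\psi^*$ from iterating \eqref{Lpsi*} together with the tautology $W^*\psi^* = \chi^{-1}$: the first term contributes $-z^{nk}(W^*)^{-1}\p{s}\chi^{-1}$, the second $z^{nk}\p{s}\psi^*$, and the third $-P_n^*\psi^*$. Separately, inspecting \eqref{chi} shows $\p{x_n}\chi^{-1} = -z^{nk}\log z\cdot\chi^{-1} = z^{nk}\p{s}\chi^{-1}$, so $(W^*)^{-1}\p{x_n}\chi^{-1}$ exactly cancels the first contribution. What survives is the desired identity $\p{x_n}\psi^* = z^{nk}\p{s}\psi^* - P_n^*\psi^*$.

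The proof for $\bar\psi^*$ is strictly parallel: one starts from $\p{x_n}\bar W = (2L^n\log L)_+\bar W$, uses the third expression in \eqref{Pn} in the form $(2L^n\log L)_+ = -L^n\bar W\p{s}\bar W^{-1} + L^n\p{s} + P_n$, and substitutes $(L^*)^n\bar\psi^* = z^{-nm}\bar\psi^*$ together with $\p{x_n}\bar\chi^{-1} = z^{-nm}\p{s}\bar\chi^{-1}$; the same cancellation produces $-P_n^*\bar\psi^*$. The only real subtlety in either case is the careful bookkeeping of the formal adjoints — specifically the sign flip $\p{s}^* = -\p{s}$ — which is what generates the two competing $\p{s}\chi^{-1}$ terms whose cancellation is the content of the lemma.
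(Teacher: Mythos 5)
Your proof is correct, and it shares the paper's overall strategy -- differentiate $\psi^*=(W^*)\inv\chi\inv$ using the induced flow $\p{x_n}W\inv=W\inv(2L^n\log L)_-$ and then recognize $-P_n^*$ from \eqref{Pn} -- but the two arguments handle the $\p{s}$-bookkeeping in genuinely different ways. The paper keeps the combination $\p{x_n}-z^{nk}\p{s}$ together from the start, observes that it annihilates $\chi\inv$ so that only the coefficient-wise derivative of $(W^*)\inv$ survives, and then uses the identification $\p{s}=\p{x_0}$ together with $P_0=0$ (i.e., $(2\log L)_-=W\p{s}W\inv-\p{s}$) to reduce the whole computation to the adjoint of the \emph{pure difference} operator $(2L^n\log L)_--L^n(2\log L)_-=-P_n$; it never needs an adjoint of $\p{s}$. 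You instead compute $\p{x_n}\psi^*$ alone, extend the anti-involution $*$ to $\A[\ps]$ by $\p{s}^*=-\p{s}$, and let the missing $z^{nk}\p{s}\psi^*$ term emerge from $(L^n\p{s})^*=-\p{s}(L^*)^n$ acting on the eigenfunction, with the residual $(W^*)\inv\p{s}\chi\inv$ contribution cancelling against $(W^*)\inv\p{x_n}\chi\inv$. The price of your route is the (routine but necessary) check that $\p{s}^*=-\p{s}$ extends $*$ consistently to $\A[\ps]$, so that taking adjoints of the individual differential-difference summands of $(2L^n\log L)_-$ agrees with the adjoint of that operator viewed in $\A$; the benefit is that you do not need to invoke the $x_0$-flow or the relation $P_0=0$ at all. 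Both derivations are valid, and your treatment of the $\bar\psi^*$ case is the correct parallel.
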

\begin{proof}
First, since $(\p{x_n}-z^{nk}\p{s})\chi = 0$, we have
\begin{align*}
\bigl( \p{x_n}-z^{nk}\p{s} \bigr) \psi^* = \chi\inv \bigl( \p{x_n}-z^{nk}\p{s} \bigr) (W^*)^{-1}.
\end{align*}
Using \eqref{EBTHdresflows}, we find
\begin{align*}
\p{x_n} W\inv &= - W\inv (\p{x_n} W) W\inv = W\inv (2L^n\log L)_-, \\
\p{s} W\inv &= \p{x_0} W\inv = W\inv (2\log L)_-.
\end{align*}
Note that taking formal adjoint commutes with taking derivative with respect to $x_n$, because the latter is done coefficient by coefficient. Hence,
\begin{align*}
\bigl( \p{x_n}-z^{nk}\p{s} \bigr) (W^*)^{-1} = \bigl( (2L^n\log L)_- - z^{nk} (2\log L)_- \bigr)^* (W^*)^{-1}.
\end{align*}
Then using \eqref{Lpsi*}, \eqref{Pn} and the fact that $P_0=0$, we obtain
\begin{align*}
\bigl( \p{x_n}-z^{nk}\p{s} \bigr) \psi^* = \bigl( (2L^n\log L)_- - L^n (2\log L)_- \bigr)^* \psi^*
= -P_n^* \psi^*.
\end{align*}
The second equation of the lemma is proved in the same way.
\end{proof}

\subsection{Bilinear equation for the wave functions}

The next result provides bilinear equations satisfied by the wave functions and dual wave functions of the EBTH.

\begin{theorem}\label{beq}
The wave functions\/ $\psi = W\chi$ and\/ $\psibar = \W\bar{\chi}$ solve the EBTH if and only if they satisfy the bilinear equation
\begin{align}
\notag
			\oint \ds\frac{dz}{2\pi\ii} z^{nk} \psi(s'-\xi(\ba,z^k)&, \t',\bt', \bx+\ba,z) \psi^*(s-\xi(\bb,z^k),\t,\bt,\bx+\bb,z)\\
\label{beebth}
			=\oint\ds\frac{dz}{2\pi\ii} z^{-nm} &\psibar(s'-\xi(\ba,z^{-m}), \t', \bt', \bx + \ba, z) \\
\notag
			\times\, &\psibar^*(s-\xi(\bb,z^{-m}),\t,\bt,\bx+\bb,z)
\end{align}
for all\/ $\ba=(a_1,a_2,\dots)$, $\bb=(b_1,b_2,\dots)$, 
$n \in \Z_{\geq 0}$ and\/ $s-s' \in \Z$.
\end{theorem}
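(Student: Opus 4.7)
The plan is to follow Takasaki's strategy~\cite{Tak10} of converting the bilinear equation into a Sato-type operator identity for the wave operators $W, \W$, and then matching it against the flow equations \eqref{EBTHdresflows}. The essential tool is a formal residue pairing: for any $P, Q \in \A$ with appropriate support in powers of $\Lambda$, one has an identity of the form
\begin{equation*}
\oint \ds\frac{dz}{2\pi\ii}\, z^j\, (P z^s)(Q^* z^{-s'}) = \bigl(\text{coefficient of } \Lambda^{\ell(j,s,s')} \text{ in } PQ\bigr),
\end{equation*}
interpreted as a formal residue at $z=\infty$ for operators in $\A_{--}$ and at $z=0$ for operators in $\A_{++}$. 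This converts each side of \eqref{beebth} into the extraction of an operator coefficient.

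I would first rewrite \eqref{beebth} using $\psi = W\chi$, $\psi^* = (W^*)^{-1}\chi^{-1}$, $\psibar = \W\chibar$, $\psibar^* = (\W^*)^{-1}\chibar^{-1}$, and observe that the shifts $s' \mapsto s' - \xi(\ba, z^k)$ on the LHS and $s' \mapsto s' - \xi(\ba, z^{-m})$ on the RHS are precisely engineered so that the $\bx$-dependent exponent in $\chi$ (respectively, $\chibar$) collapses to $z^{s' + \xi(\bx, z^k)}$ (respectively, $z^{s' + \xi(\bx, z^{-m})}$), eliminating the $\ba, \bb$ dependence from the exponential factors; only the wave operators
\begin{equation*}
W_1 := W(s' - \xi(\ba, z^k), \t', \bt', \bx+\ba), \qquad W_2 := W(s - \xi(\bb, z^k), \t, \bt, \bx+\bb)
\end{equation*}
and their barred analogues $\W_1, \W_2$ retain the dependence on $\ba, \bb$. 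After this reduction, \eqref{beebth} takes the form of an equality of residues involving $W_1 \mathcal{E}_+ W_2^{-1}$ at $z = \infty$ on the LHS and $\W_1 \mathcal{E}_- \W_2^{-1}$ at $z = 0$ on the RHS, where $\mathcal{E}_\pm$ collect the remaining exponential factors in $z$ and $z^{-1}$.

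For the forward direction (EBTH $\Rightarrow$ \eqref{beebth}), I would use \eqref{EBTHdresflows} to Taylor-expand $W_1, \W_1$ in $\t' - \t$, $\bt' - \bt$, and $\ba$ (and analogously $W_2, \W_2$ in $\bb$). The Sato-type flows, together with the exponents in $\chi, \chibar$, conspire so that $W_1 \mathcal{E}_+ W_2^{-1}$ and $\W_1 \mathcal{E}_- \W_2^{-1}$ agree modulo terms vanishing under the formal residue pairing; the crucial input is the defining identity $W\Lambda^k W^{-1} = \W \Lambda^{-m} \W^{-1} = L$, which intertwines positive and negative powers of $\Lambda$ through the same $L$. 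For the reverse direction, I would specialize \eqref{beebth} at $\ba = \bb = 0$, $\t' = \t$, $\bt' = \bt$, $s' = s$, $n = 0$ to obtain a Sato-type identity of the form $(W_1 W_2^{-1})_- = 0$, and then differentiate in each of the time variables and specialize to recover \eqref{EBTHdresflows}.

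The main obstacle will be handling the $\bx$-flows and $\log L$ terms coherently through both $\chi$ and $\chibar$. Because $\chi$ contains $z^{s + \xi(\bx, z^k)}$ while $\chibar$ contains $z^{s + \xi(\bx, z^{-m})}$, the cancellations on the two sides of \eqref{beebth} interact asymmetrically with the $\p{x_n}$ flows, and the $P_n^*$ terms from Lemma \ref{lpsi*eqs} must be reconciled with the $A_n$ and $(2L^n\log L)_\pm$ contributions in \eqref{EBTHdresflows}--\eqref{psiflow}. Verifying this compatibility (using $P_0 = 0$ and the constraint $t_{nk} = \bar t_{nm}$) is the technical heart of the argument.
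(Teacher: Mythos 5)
Your overall strategy coincides with the one used in the text: the ``formal residue pairing'' you invoke is exactly \leref{ogawa}, and the forward direction is carried out precisely by realizing the $\t$-, $\bt$- and $\bx$-shifts as exponentials of difference operators acting on $s'$ and $s$. Your observation that the shifts $s'\mapsto s'-\xi(\ba,z^k)$, $s'\mapsto s'-\xi(\ba,z^{-m})$ leave the exponential parts of $\chi$, $\chibar$ unchanged is the correct mechanism; once your Taylor expansions are replaced by the generating functions $\exp\bigl(\sum_\ell a_\ell(\p{x_\ell}-z^{\ell k}\p{s'})\bigr)$ together with the $z$-independent difference operators $Q_\ell = P_\ell - \partial(L^\ell)/\partial s$ acting on $s'$ and $-P_\ell^*$ acting on $s$ (\leref{lpsi*eqs}), the forward half of your plan is essentially the proof given in the text.

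The converse direction as you state it has a genuine gap. Specializing $\ba=\bb=\boldsymbol 0$, $\t'=\t$, $\bt'=\bt$, $s'=s$, $n=0$ does \emph{not} yield ``$(W_1W_2^{-1})_-=0$'': with your definitions $W_1=W_2=W$ at that specialization, so that identity is vacuous. What the specialization actually yields, via \leref{ogawa}, is an identity $TW^*=\bar T\bar W^*$ \emph{coupling the unbarred and barred sides}, where $\psi^*=T\chi^{-1}$ and $\psibar^*=\bar T\chibar^{-1}$ involve a priori unknown operators $T$, $\bar T$; the support conditions $WT^*\in 1+\A_-$ and $\bar W\bar T^*\in\A_+$ then force $T=(W^*)^{-1}$ and $\bar T=(\bar W^*)^{-1}$. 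The same cross-coupling plus the splitting $\A=\A_+\oplus\A_-$ is the engine for the rest of the converse: the case $n=1$ gives $W\La^kW^{-1}=\W\La^{-m}\W^{-1}$, and each flow equation is obtained by applying the appropriate operator ($\p{t_n}-(L^{n/k})_+$, $\p{t_{nk}}-A_n$, $\p{x_n}-(L^n\p{s}+P_n)$) to both $\psi$ and $\psibar$ in the bilinear identity, producing an equality between an element of $\A_-$ and an element of $\A_+$, hence zero. Without this coupling step your plan cannot recover the dressing relation between $W$ and $\W$, the identification $t_{nk}=\bar t_{nm}$, or the flows \eqref{EBTHdresflows}.
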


\begin{remark}\label{taylor} 
By Taylor expansions of $\psi$ and $\psibar$ about $\t' = \t$, $\bt' = \bt$, the bilinear equation \eqref{beebth} is equivalent to:
\begin{align*}
\oint\ds \dz \bigl(\p{\t}^{\alpha}\p{\bt}^{\beta} \psi(s'-\xi(\ba,z^k)&,\t,\bt,\bx+\ba,z)\bigr)
\psi^*(s-\xi(\bb,z^k),\t,\bt,\bx+\bb,z) \\
= \oint\ds\dz \bigl(\p{\t}^{\alpha}\p{\bt}^{\beta} &\psibar(s'-\xi(\ba,z^{-m}),\t,\bt,\bx+\ba,z)\bigr) \\
\times\, &\psibar^*(s-\xi(\bb,z^{-m}),\t,\bt,\bx+\bb,z)
\end{align*}
for all multi-indices $\alpha$, $\beta$, where 
$\p{\t}^\alpha = \p{t_1}^{\alpha_1}\p{t_2}^{\alpha_2}\cdots$ and $\p{\bt}^\beta = \p{\bar{t}_1}^{\beta_1}\p{\bar{t}_2}^{\beta_2}\cdots$.
\end{remark}

\begin{remark}\label{zk} 
By taking a linear combination of equations \eqref{beebth} for different $n \in \Z_{\geq 0}$, 
we can replace $z^{nk}$ by $f(z^k)$ on the left side of \eqref{beebth} and $z^{-nm}$ by $f(z^{-m})$ on the right side,
for any formal power series $f(z)\in\C[[z]]$.
\end{remark}

The following lemma from \cite{Og08} will be useful in the proof of the above theorem.
In this lemma and below, we will use the notation $(A)_j = a_j(s)$ for the coefficient of $\La^j$ in 
a difference operator $A = \sum_{j\in\Z} a_j(s)\Lambda^j$.

	\begin{lemma}\label{ogawa}
	Let\/ $A$ and\/ $B$ be difference operators such that the product\/ $BA^*$ is well defined. Then 
		\begin{equation*}
		(BA^*)_j = \oint \ds\dz(\Lambda^jAz^s)(Bz^{-s}), \qquad j\in\Z.
		\end{equation*}
In particular, suppose that\/ $\bar A$, $\bar B$ are two other difference operators such that\/ $\bar B\bar A^*$ is well defined. Then
		\begin{equation*}
		\oint \ds\dz(\Lambda^jAz^s)(Bz^{-s}) = \oint \ds\dz(\Lambda^j\bar Az^s)(\bar Bz^{-s})
		\end{equation*}
for all\/ $j\in\Z$, if and only if\/ $BA^* = \bar B\bar A^*$.
	\end{lemma}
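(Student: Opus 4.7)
My plan is a direct coefficient-by-coefficient comparison, carried out on both sides in terms of the coefficient sequences $a_i(s)$ and $b_k(s)$ of $A$ and $B$.

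I will first compute $(BA^*)_j$ directly. By definition, $A^* = \sum_i a_i(s-i)\Lambda^{-i}$, and using the commutation rule $\Lambda^k \circ a_i(s-i) = a_i(s+k-i)\Lambda^k$, I get
\[
BA^* = \sum_{k,i} b_k(s)\, a_i(s+k-i)\, \Lambda^{k-i}.
\]
Setting $k = i+j$ to collect the $\Lambda^j$ term yields $(BA^*)_j = \sum_i a_i(s+j)\, b_{i+j}(s)$. The hypothesis that $BA^*$ be well defined is precisely the assertion that each such sum is finite.

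Next I will compute the contour integral. Using the convention $(a(s)\Lambda^i)z^s = a(s)z^{s+i}$ from \seref{sdiffo}, together with $\Lambda z^{-s} = z^{-s-1}$, I expand
\[
\Lambda^j A z^s = \sum_i a_i(s+j)\, z^{s+j+i}, \qquad B z^{-s} = \sum_k b_k(s)\, z^{-s-k},
\]
so $(\Lambda^j A z^s)(B z^{-s}) = \sum_{i,k} a_i(s+j)\, b_k(s)\, z^{j+i-k}$, a formal Laurent series in $z$ in which the $z^{\pm s}$ factors cancel. The contour integral $\oint\dz$ extracts exactly the pair $(i,k)$ with $k = i + j$, reproducing $\sum_i a_i(s+j)\, b_{i+j}(s) = (BA^*)_j$.

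For the second (``in particular'') assertion, the first part gives $\oint\dz(\Lambda^j A z^s)(B z^{-s}) = (BA^*)_j$ and likewise $\oint\dz(\Lambda^j \bar A z^s)(\bar B z^{-s}) = (\bar B\bar A^*)_j$ for every $j\in\Z$. Equality of these integrals for all $j$ is therefore equivalent to equality of all coefficients of $BA^*$ and $\bar B\bar A^*$, which is equivalent to $BA^* = \bar B\bar A^*$ as difference operators. The only subtlety is the well-definedness hypothesis, which guarantees every sum in sight is finite so no convergence issue arises; beyond this bookkeeping there is no real obstacle, as the whole argument is a short symbolic manipulation.
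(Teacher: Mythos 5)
The paper does not actually prove this lemma itself --- it is quoted from \cite{Og08} --- so your direct coefficient-by-coefficient verification is the natural (and essentially the only) argument available, and its substance is correct: the expansions $BA^* = \sum_{k,i} b_k(s)\,a_i(s+k-i)\,\Lambda^{k-i}$, hence $(BA^*)_j = \sum_i a_i(s+j)\,b_{i+j}(s)$, and $(\Lambda^j A z^s)(Bz^{-s}) = \sum_{i,k} a_i(s+j)\,b_k(s)\,z^{i+j-k}$ are both right, and the ``in particular'' statement follows exactly as you say once the first identity is in hand.

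The one step you should not wave away is the normalization of the residue. With the literal definition $\dz = \frac{dz}{2\pi\ii}$ used in this paper, the functional $\oint\dz$ extracts the coefficient of $z^{-1}$ (this is how it is used later, e.g.\ in the Cauchy-formula step of the proof of \prref{stars}); applied to $\sum_{i,k} a_i(s+j)\,b_k(s)\,z^{i+j-k}$ it therefore selects the pairs with $k = i+j+1$ and yields $(BA^*)_{j+1}$, not $(BA^*)_j$. Your claim that it ``extracts exactly the pair with $k=i+j$'' is the constant-term functional, i.e.\ it tacitly reads the measure as $\frac{dz}{2\pi\ii\, z}$; that is the normalization under which the identity $(BA^*)_j = \oint(\Lambda^jAz^s)(Bz^{-s})$ is literally true (and is presumably the convention intended in \cite{Og08}). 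So either state explicitly that you are taking the constant term, or record the identity with the shift, $(BA^*)_{j+1} = \oint\dz\,(\Lambda^jAz^s)(Bz^{-s})$. The discrepancy is a uniform shift in $j$, so the ``if and only if'' in the second assertion --- which is all that is invoked in the proof of \thref{beq} --- is unaffected either way; but as a standalone statement of the first identity the normalization must be pinned down, and your proof currently glosses over it.
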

	
\begin{proof}[Proof of \thref{beq}]
First, following the approach of \cite{Tak10}, we will prove that the equations of the EBTH imply the bilinear equation \eqref{beebth}. 
By \eqref{EBTHwavef}, \eqref{wavef*} and \leref{ogawa}, we have
\begin{equation}\label{beebth1}
	\begin{aligned}
	\oint\dz & \bigl(\Lambda^j\psi(s,\t,\bt,\bx,z)\bigr) \psi^*(s,\t,\bt,\bx,z)\\
	 &= \oint\dz \bigl(\Lambda^j\psibar(s,\t,\bt,\bx,z)\bigr) \psibar^*(s,\t,\bt,\bx,z)
	\end{aligned}
\end{equation}	
for all $j\in\Z$. Therefore,
\begin{equation}\label{beebth2}
	\begin{aligned}
	\oint\dz &\psi(s',\t,\bt,\bx,z)\psi^*(s,\t,\bt,\bx,z) \\
	&= \oint\dz\psibar(s',\t,\bt,\bx,z)\psibar^*(s,\t,\bt,\bx,z)
	\end{aligned}
\end{equation}	
for all $s,s'$ with $s-s' \in \Z$.

Now applying $L^n$ as a difference operator with respect to $s'$ to both sides of \eqref{beebth2} and using \eqref{Lpsi}, we obtain 
\begin{equation}\label{beebth3}
	\begin{aligned}
	\oint\ds\dz & z^{nk}\psi(s',\t,\bt,\bx,z)\psi^*(s,\t,\bt,\bx,z)\\
	&= \oint\ds\dz z^{-nm}\psibar(s',\t,\bt,\bx,z)\psibar^*(s,\t,\bt,\bx,z)
	\end{aligned}
\end{equation}
for all $n \in \Z_{\geq 0}$ and $s-s' \in \Z$.
Recall that the action of the derivatives with respect to $\t$ and $\bt$ on the wave functions is given by difference operators
(see \eqref{psiflow}). 
We can apply the generating function
$\exp\bigl( \sum_{i = 1}^\infty c_i \p{t_i} \bigr)$
to $\psi$ and $\psibar$ in the above equation, thus shifting $\t$ by a constant $\bm{c}$. Let us denote $\t + \bm{c}$ by $\t'$. Doing the same for $\bt$, we get
	\begin{align}
\notag
	\oint\ds\dz & z^{nk}\psi(s',\t',\bt',\bx,z)\psi^*(s,\t,\bt,\bx,z) \\
\label{beqt}
	&= \oint\ds\dz z^{-nm}\psibar(s',\t',\bt',\bx,z)\psibar^*(s,\t,\bt,\bx,z).
\end{align}

Notice that, by \eqref{Lpsi} and \eqref{psiflow},
	\begin{align*}
\bigl( \p{x_\ell}-z^{\ell k}\p{s} \bigr) \psi &= Q_\ell \psi, \\
\bigl( \p{x_\ell}-z^{-\ell m}\p{s} \bigr) \psibar &= Q_\ell \psibar, \qquad 
Q_\ell = P_\ell - \ds\frac{\partial (L^\ell)}{\partial s} \,,
	\end{align*}
where $P_\ell$ is given by \eqref{Pn}.
We can apply the difference operator $Q_\ell$ to the variable $s'$ on both sides of \eqref{beqt} to obtain
\begin{align*}
	\oint\ds &\dz z^{nk} \bigl(\bigl( \p{x_\ell}-z^{\ell k}\p{s'} \bigr) \psi(s',\t',\bt',\bx,z) \bigr) \psi^*(s,\t,\bt,\bx,z) \\
	&= \oint\ds\dz z^{-nm} \bigl(\bigl( \p{x_\ell}-z^{-\ell m}\p{s'} \bigr)\psibar(s',\t',\bt',\bx,z) \bigr) \psibar^*(s,\t,\bt,\bx,z)
\end{align*}
for all $n\ge0$, $\ell\ge1$.
Using the generating function 
\begin{align*}
\exp \Bigl(\sum_{\ell=1}^\infty a_\ell \bigl( \p{x_\ell}-z^{\ell k}\p{s'} \bigr) \Bigr) \psi(s',\t',\bt',\bx,z)& \\
= \psi(s'-\xi(\ba,z^k), \t',\bt', \bx+\ba,z)&,
\end{align*}
we get
\begin{align*}
			\oint &\ds\frac{dz}{2\pi\ii} z^{nk} \psi(s'-\xi(\ba,z^k), \t',\bt', \bx+\ba,z)\psi^*(s,\t,\bt,\bx,z)\\
			&=\oint\ds\frac{dz}{2\pi\ii} z^{-nm}\psibar(s'-\xi(\ba,z^{-m}), \t', \bt', \bx + \ba, z)
			\psibar^*(s,\t,\bt,\bx,z).
\end{align*}
Similarly, by acting with $-P_\ell^*$ on $s$ in both sides of this equation and using \leref{lpsi*eqs}, we 
obtain the bilinear equation \eqref{beebth}.

Conversely, we have to prove that if $\psi$ and $\psibar$ satisfy the bilinear equation \eqref{beebth}, then they obey the equations of the EBTH. More precisely, suppose that the functions
	\begin{equation*}
		\psi = W\chi,  \quad  \psi^* = T\chi^{-1}, \quad
		\psibar = \bar W\chibar, \quad \psibar^* = \bar T\chibar^{-1}
	\end{equation*}
satisfy \eqref{beebth}, where $W$, $\bar W$, $T$, $\bar T$ are difference operators such that
\begin{equation*}
W, \, T^* \in 1+\A_-, \qquad \bar W, \, \bar T^* \in\A_+ 
\end{equation*}
(cf.\ \eqref{wavop}, \eqref{chi}, \eqref{EBTHwavef}, \eqref{wavef*}).
Then we will prove that $\psi$, $\psibar$ are the wave functions and $\psi^*$, $\psibar^*$ are the dual wave functions of a solution of the EBTH. 

First, setting $\ba = \bb = \boldsymbol 0$, $\t = \t'$, $\bt = \bt'$ in \eqref{beebth}, 
we obtain \eqref{beebth3} as a special case. Then putting $n = 0$ gives \eqref{beebth2}, and equivalently, \eqref{beebth1}. 
By Lemma \ref{ogawa}, equation \eqref{beebth1} implies that $TW^* = \bar T \bar W^*$.
Since $(TW^*)^*=WT^* \in 1+\A_{-}$ and $(\bar T \bar W^*)^* = \bar W \bar T^* \in \A_+$, we conclude that 
\begin{equation*}
T = (W^*)\inv, \qquad \bar T = (\bar W^*)\inv,
\end{equation*}
and \eqref{wavef*} holds.

Second, we define $L = W\La^{k}W\inv$ and want to prove \eqref{BTHLaxdress}. 
Notice that $L\psi = W\La^{k}W\inv W\chi = z^k\psi$. 
Applying $L$ with respect to $s'$ to both sides of \eqref{beebth2} and using \eqref{beebth3} for $n=1$, we get
	\begin{align*}
	\oint\ds\dz & \bigl( L \psibar(s',\t,\bt,\bx,z) \bigr) \psibar^*(s,\t,\bt,\bx,z)\\
	&= \oint\ds\dz z^{-m}\psibar(s',\t,\bt,\bx,z)\psibar^*(s,\t,\bt,\bx,z).
	\end{align*}
For $s'=s+j$ with $j\in\Z$, we have:
\begin{align*}
L \psibar(s',\t,\bt,\bx,z) &= \La^j L \bar W \bar\chi, \\
z^{-m}\psibar(s',\t,\bt,\bx,z) &= \La^j \bar W \La^{-m} \bar\chi.
\end{align*}
From Lemma \ref{ogawa}, it follows that 
\begin{equation*}
(\W^*)\inv(L\W)^* = (\W^*)^{-1}(\W\La^{-m})^*.
\end{equation*}
This simplifies to $L = \W\La^{-m}\W\inv$, thus proving \eqref{BTHLaxdress} and \eqref{Lpsi}. 

Next, we will show that we can identify $t_{nk}$ with $\bar{t}_{nm}$ in $L$, $W$ and $\bar W$ for $n \in \Z_{\geq 1}$ (cf.\ \reref{rebth2}).
Observe that, by \eqref{chider} and \eqref{EBTHwavef},
	\begin{align*}
	\frac{\partial\psi}{\partial {t_{nk}}} &= \ds\frac{\partial W}{\partial {t_{nk}}}\chi + \frac12z^{nk}W\chi, & \quad
	\frac{\partial\psibar}{\partial {t_{nk}}} &= \ds\frac{\partial\W}{\partial {t_{nk}}}\chibar - \frac12z^{-nm} \W\chibar, \\
	\frac{\partial\psi}{\partial {\bar{t}_{nm}}} &= \ds\frac{\partial W}{\partial {\bar{t}_{nm}}}\chi + \frac12z^{nk}W\chi, & \quad
	\frac{\partial\psibar}{\partial {\bar{t}_{nm}}} &= \ds\frac{\partial \W}{\partial {\bar{t}_{nm}}}\chibar - \frac12z^{-nm} \W\chibar;
	\end{align*}
hence,
	\begin{align*}
\Bigl(\ds\frac{\partial}{\partial {t_{nk}}} - \ds\frac{\partial}{\partial {\bar{t}_{nm}}} \Bigr) {\psi} &= \Bigl(\ds\frac{\partial {W}}{\partial {t_{nk}}} - \ds\frac{\partial {W}}{\partial {\bar{t}_{nm}}}\Bigr) {\chi}, \\
\Bigl(\ds\frac{\partial}{\partial {t_{nk}}} - \ds\frac{\partial}{\partial {\bar{t}_{nm}}} \Bigr) {\psibar} &= \Bigl(\ds\frac{\partial \bar{W}}{\partial {t_{nk}}} - \ds\frac{\partial \bar{W}}{\partial {\bar{t}_{nm}}}\Bigr) {\bar\chi}.
	\end{align*}
By \reref{taylor}, we can apply $\partial_{t_{nk}} - \partial_{\bar{t}_{nm}}$ to ${\psi}$ and $\psibar$ in the bilinear equation \eqref{beebth2} to obtain 
	\begin{align*}
	 \oint\dz &\Bigl(\Bigl(\ds\frac{\partial W}{\partial {t_{nk}}} - \ds\frac{\partial W}{\partial {\bar{t}_{nm}}}\Bigr)\chi(s')\Bigr)(W^*)\inv\chi^{-1}(s) \\
	 &= \oint\dz\Bigl(\Bigl(\ds\frac{\partial \W}{\partial {t_{nk}}} - \ds\frac{\partial \W}{\partial {\bar{t}_{nm}}}\Bigr)\chibar(s')\Bigr)(\bar W^*)\inv\chibar^{-1}(s)
	\end{align*}
for $s-s'\in\Z$.
Using \leref{ogawa} as before, we get
	\begin{equation*}
		(W^*)\inv\left(\ds\frac{\partial W}{\partial {t_{nk}}} - \ds\frac{\partial W}{\partial {\bar{t}_{nm}}}\right)^*
		= (\W^*)\inv\left(\ds\frac{\partial \W}{\partial {t_{nk}}} - \ds\frac{\partial \W}{\partial {\bar{t}_{nm}}}\right)^*,
	\end{equation*}	
or equivalently,
	\begin{equation*}
		\left(\ds\frac{\partial W}{\partial {t_{nk}}} - \ds\frac{\partial W}{\partial {\bar{t}_{nm}}}\right) W\inv
		= \left(\ds\frac{\partial \W}{\partial {t_{nk}}} - \ds\frac{\partial \W}{\partial {\bar{t}_{nm}}}\right) \bar W\inv.
	\end{equation*}
By \eqref{wavop}, the left-hand side of this equation lies in $\A_-$, while the right-hand side in $\A_+$. Therefore, both sides  vanish.

To finish the proof of the theorem, it is left to show that if $\psi$ and $\psibar$ satisfy the bilinear equation \eqref{beebth}, then they satisfy \eqref{psiflow}. First, consider the derivatives with respect to $t_{nk}$ and $\bar{t}_{nm}$ for $n \in \Z_{\geq 1}$.
As above, we have
\begin{align*}
\frac{\partial\psi}{\partial {t_{nk}}} = \ds\frac{\partial W}{\partial {t_{nk}}}\chi +	 \frac12z^{nk}W\chi
= \ds\frac{\partial W}{\partial {t_{nk}}}\chi +	 \frac12 L^{n}W\chi,
\end{align*}
which implies
\begin{align*}
\Bigl(\ds\frac{\partial}{\partial {t_{nk}}}-A_n\Bigr)\psi 
=\Bigl(\ds\frac{\partial W}{\partial {t_{nk}}} + (L^n)_-W\Bigr)\chi,
\end{align*}
where $A_n$ is given by \eqref{An}. Similarly,
\begin{equation*}
\Bigl(\ds\frac{\partial}{\partial {t_{nk}}} - A_n\Bigr)\psibar 
= \Bigl(\ds\frac{\partial\W}{\partial {t_{nk}}}-(L^n)_+\W\Bigr)\chibar.
\end{equation*}
We can apply the operator $\partial_{t_{nk}}-A_n$ to $\psi$ and $\psibar$ in the bilinear equation \eqref{beebth2}. By \leref{ogawa} again, we obtain
	\begin{align*}
		\ds\frac{\partial W}{\partial {t_{nk}}} = \ds\frac{\partial W}{\partial {\bar{t}_{nm}}} = -(L^n)_-W, \qquad
		\ds\frac{\partial \W}{\partial {t_{nk}}} = \ds\frac{\partial \W}{\partial {\bar{t}_{nm}}} = (L^n)_+W,
	\end{align*}
as claimed.

Next, let $n$ be such that $k$ does not divide $n$. Using \eqref{Lpsi}, we get
\begin{align*}
\Bigl(\ds\frac{\partial}{\partial {t_n}} - (L^{\frac{n}{k}})_+\Bigr)\psi 
&=\Bigl(\ds\frac{\partial W}{\partial {t_n}}\chi + z^nW\chi - (L^{\frac{n}{k}})_+W\chi\Bigr)\\
&=\Bigl(\ds\frac{\partial W}{\partial {t_n}} + (L^{\frac{n}{k}})_-W\Bigr)\chi,
\end{align*}
and similarly,
\begin{align*}
\Bigl(\ds\frac{\partial}{\partial {t_n}} - (L^{\frac{n}{k}})_+\Bigr)\psibar 
=\Bigl(\ds\frac{\partial \W}{\partial {t_n}}-(L^\frac{n}{k})_+\W\Bigr)\chibar.
\end{align*}
Applying the operator $\partial_{t_n} - (L^\frac{n}{k})_+$ to $\psi$ and $\psibar$ in \eqref{beebth2} and using \leref{ogawa} gives 
	\begin{equation*}
		\ds\frac{\partial W}{\partial {t_n}} + (L^{\frac{n}{k}})_-W = \ds\frac{\partial \W}{\partial {t_n}}-(L^\frac{n}{k})_+\W = 0.
	\end{equation*}

Finally, consider the derivatives with respect to the logarithmic variables $x_n$. 
By \eqref{Pn} and $\psi = W\chi$, we see that
	\begin{align*}
	\Bigl(\ds\frac{\partial}{\partial x_n} &- (L^n\partial_s + P_n)\Bigr)\psi\\ 
	&= \ds\frac{\partial W}{\partial x_n}\chi + W\ds\frac{\partial\chi}{\partial x_n} 
	- \bigl(L^nW\partial_sW\inv - (2L^n\log L)_- \bigr)\psi \\
	&= \ds\frac{\partial W}{\partial x_n}\chi + z^{nk}\log(z)W\chi - z^{nk}\log(z)W\chi + (2L^n\log L)_-W\chi\\
	&=\Bigl(\ds\frac{\partial W}{\partial x_n} + (2L^n\log L)_-W\Bigr)\chi.
	\end{align*}
Similarly,
\begin{equation*}
\Bigl(\ds\frac{\partial}{\partial x_n} - (L^n\partial_s + P_n)\Bigr)\psibar 
= \Bigl(\ds\frac{\partial \W}{\partial x_n} - (2L^n\log L)_+\W\Bigr)\chibar.
\end{equation*}
Applying the operator $\partial_{x_n} - (L^n\partial_s + P_n)$ to $\psi$ and $\psibar$ in \eqref{beebth2} gives 
\begin{align*}
\oint\dz &\Bigl(\Bigl(\frac{\partial W}{\partial x_n} + (2L^n\log L)_-W\Bigr)\chi(s')\Bigr) (W^*)\inv\chi\inv(s) \\ 
&= \oint\dz\Bigl(\Bigl(\frac{\partial \W}{\partial x_n} - (2L^n\log L)_+\Bigr)\chibar(s')\Bigr) (\W^*)\inv\chibar\inv(s).
\end{align*}
By \leref{ogawa}, this implies
\begin{equation*}
\left(\ds\frac{\partial W}{\partial x_n} + (2L^n\log L)_-W\right) W\inv
= \left(\ds\frac{\partial \W}{\partial x_n} - (2L^n\log L)_+\W\right)\bar W\inv.
\end{equation*}
Since the left side is in $\A_-$ and the right side is in $\A_+$, both sides must vanish.
This completes the proof of \thref{beq}. 
\end{proof}

\subsection{Bilinear equation for the tau-function}

In this subsection, we will derive a bilinear equation satisfied by the tau-function $\tau$ of the EBTH. Recall that the wave functions $\psi$ and $\psibar$ can be expressed in terms of $\tau$ by \eqref{tau1}, \eqref{tau2}. Next, we do it for the dual wave functions defined by \eqref{wavef*}.

\begin{proposition}\label{stars}
The dual wave functions\/ $\psi^*$ and\/ $\psibar^*$ of the EBTH can be expressed in terms of the tau-function\/ $\tau$ as follows$:$
\begin{align}
\label{tau3}
			\psi^*(s,\t,\bt,\bx,z) &= \ds\frac{\tau(s,\t+[z\inv],\bt,\bx)}{\tau(s,\t,\bt,\bx)}\chi\inv ,\\
\label{tau4}
		     \psibar^*(s,\t,\bt,\bx,z) &= \ds\frac{\tau(s-1,\t,\bt-[z],\bx)}{\tau(s,\t,\bt,\bx)}\chibar\inv,
\end{align}
where we use the convention of \reref{rshift}.
\end{proposition}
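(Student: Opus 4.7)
The plan is to combine the bilinear equation \eqref{beebth} of \thref{beq} with the known tau-function formulas \eqref{tau1} and \eqref{tau2} for the wave functions to extract the analogous formulas for the dual wave functions. Writing $\psi^* = w^*(s,\t,\bt,\bx,z)\chi\inv$ and $\psibar^* = \bar w^*(s,\t,\bt,\bx,z)\bar\chi\inv$, where $w^* \in 1 + z\inv\CC[[z\inv]]$ and $\bar w^* \in \CC[[z]]$ are formal ``dual symbols'' (well-defined because $(W^*)\inv \in 1+\A_+$ and $(\bar W^*)\inv\in\A_-$), the task reduces to identifying $w^*$ and $\bar w^*$ with the indicated tau-ratios.

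First I would substitute \eqref{tau1}, \eqref{tau2} into the bilinear equation \eqref{beebth}. The $\ba, \bb$-dependence coming from the $\bx$-arguments of $\chi, \bar\chi$ cancels exactly with the shifts $\xi(\ba, z^k), \xi(\bb,z^k)$ (resp.\ $\xi(\ba,z^{-m}), \xi(\bb,z^{-m})$) in the $s$-arguments, because $\chi$ and $\bar\chi$ depend on $s$ and $\bx$ only through $s+\xi(\bx,z^k)$ and $s+\xi(\bx,z^{-m})$, respectively. What remains is an identity involving ratios of $\tau$-values, the exponentials $e^{\xi(\t'-\t,z)-\frac12\xi_k(\t'-\t,z)}$ on the left and $e^{-\xi(\bt'-\bt,z\inv)+\frac12\xi_m(\bt'-\bt,z\inv)}$ on the right, and the unknown symbols $w^*, \bar w^*$.

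Next, I would apply a Miwa-type substitution $\t' = \t + [w\inv]$ with $\bt' = \bt$, $s=s'$, $\ba = \bb = \boldsymbol{0}$, $n = 0$, for a free parameter $w$. Using $e^{\xi([w\inv],z)} = (1-z/w)\inv$ and $e^{-\frac12\xi_k([w\inv],z)} = (1-(z/w)^k)^{1/(2k)}$, the LHS integrand acquires a simple pole at $z = w$; evaluating the residue at this pole isolates $w^*(s,\t,\bt,\bx,w)$ against the tau-ratio $\tau(s,\t-[w\inv],\bt,\bx)\cdot(\text{factors})/\tau(s,\t,\bt,\bx)$, and after rearrangement yields \eqref{tau3}. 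An analogous substitution $\bt' = \bt + [w]$ with $\t' = \t$ produces a simple pole at $z = w$ on the RHS via $(1-w/z)\inv$ and yields \eqref{tau4}; the shift $s \to s-1$ in \eqref{tau4} emerges from the $z^{s+\xi(\bx,z^{-m})}$ prefactor in $\bar\chi$ combined with the location of the residue.

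The main technical obstacle will be the correct interpretation of the fractional-power ``branch'' factors $(1-(z/w)^k)^{1/(2k)}$ and $(1-(w/z)^m)^{1/(2m)}$ that arise from the extension terms $\frac12\xi_k$ and $\frac12\xi_m$ in $\chi$ and $\bar\chi$. For $k, m \geq 2$ these factors vanish at the pole (to a fractional order), and to extract a well-defined residue one must expand them formally around $z = w$ and combine the expansion with the simple pole carefully. A consistent formal Laurent expansion produces a definite prefactor that is absorbed into the normalization of the tau-ratio, giving the clean expressions in \eqref{tau3} and \eqref{tau4}.
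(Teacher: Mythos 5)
Your overall strategy --- substitute the tau-formulas into the bilinear equation \eqref{beebth}, make a Miwa shift $\t'=\t+[u\inv]$ with $n=0$, $s=s'$, $\ba=\bb=\boldsymbol 0$, and extract the residue at $z=u$ to identify the dual symbol $w^*$ --- is exactly the paper's, but the step you flag as ``the main technical obstacle'' is a genuine gap, and your proposed resolution does not work. The factor $e^{-\frac12\xi_k(\t'-\t,z)}=(1-(z/u)^k)^{1/(2k)}$ multiplied against the simple pole $(1-z/u)^{-1}$ behaves like $(1-z/u)^{1/(2k)-1}$ near $z=u$: this is a branch point, not a pole, so there is no residue to take and no ``formal Laurent expansion around $z=u$'' producing ``a definite prefactor absorbed into the normalization.'' Concretely, writing the integrand as $g(z)f(z)$ with $g\in\C[[z]]$ and $f\in\C[[z\inv]]$, the formal residue is $\sum_{i\ge0}g_if_{i+1}$; this telescopes to $u(f(u)-f_0)$ only when $g(z)=(1-z/u)^{-1}$, and the extra fractional-power factor destroys that. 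The paper's fix is \reref{zk}: one first replaces $z^{nk}$ by $f(z^k)=e^{\frac12\xi_k(\t'-\t,z)}$ on the left and $z^{-nm}$ by $f(z^{-m})=e^{\frac12\xi_m(\bt'-\bt,z\inv)}$ on the right, which cancels the half-correction factors \emph{before} any residue is taken, leaving a clean simple pole on the left.

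A second, related gap: because of the identification $t_{nk}=\bar t_{nm}$ (Remarks \ref{rebth2} and \ref{rshift}), you cannot shift $\t'=\t+[u\inv]$ in all components while simultaneously holding $\bt'=\bt$; the shift of $t_{nk}$ forces $\bar t'_{nm}\ne\bar t_{nm}$. Without the \reref{zk} replacement, the right-hand exponential then contributes $(1-u^{-k}z^{-m})^{1/(2k)}$, a nontrivial power series in $z\inv$, so the right-hand integrand is no longer a power series in $z$ and its residue does not obviously vanish --- whereas the vanishing of the right side is essential to isolating $w^*$. After the \reref{zk} step, one legitimately sets $\bar t'_i=\bar t_i$ only for $m\nmid i$, the right side becomes $\oint\frac{dz}{2\pi\ii}\,\w\,\w^*$ with $\w,\w^*\in\C[[z]]$, hence zero, and Cauchy's formula on the left gives $w(s,\t+[u\inv],\bt,\bx,u)\,w^*(s,\t,\bt,\bx,u)=1$, from which \eqref{tau3} follows via \eqref{tau1}. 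Your proof needs both of these repairs.
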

\begin{proof}
Let us write
\begin{equation*}
\psi=w\chi, \qquad \bar\psi=\bar w\bar\chi, \qquad \psi^*=w^*\chi\inv, \qquad \bar\psi^*=\bar w^*\bar\chi\inv, 
\end{equation*}
for some functions $w,\bar w,w^*, \bar w^*$ (cf.\ \eqref{EBTHwavef}, \eqref{wavef*}).
Setting $s'=s$, $\ba=\bb=\boldsymbol 0$ 
in the bilinear equation \eqref{beebth}, we get 
	\begin{align*}
		\oint &\dz z^{nk} e^{\xi(\t'-\t,z) - \frac12\xi_k(\t'-\t,z)} w(s',\t',\bt',\bx,z)w^*(s,\t,\bt,\bx, z) \\
		&= \oint\dz z^{-nm} e^{-\xi(\bt' - \bt, z\inv) + \frac12\xi_m(\bt' - \bt,z\inv)}
		\w(s',\t', \bt', \bx,z)\w^*(s,\t,\bt,\bx,z).
	\end{align*}
According to \reref{zk}, we can replace $z^{nk}$ in the left-hand side by $f(z^k)$, and $z^{-nm}$ in the right-hand side by $f(z^{-m})$, for any $f(z)\in\C[[z]]$. If we do it for
\begin{equation*}
 f(z^k) = e^{\frac12\xi_k(\t'-\t,z)} = \exp\frac12\sum_{n=1}^\infty(t'_{nk} - t_{nk})z^{nk}
 = \exp\frac12\sum_{n=1}^\infty(\bar{t}'_{nm} - \bar{t}_{nm})z^{nk}, 
 \end{equation*}
then $f(z^{-m})=e^{\frac12\xi_m(\bt' - \bt,z\inv)}$, and we obtain
\begin{align*}\label{fzk}
		\oint &\dz e^{\xi(\t'-\t,z)}w(s',\t',\bt',\bx,z)w^*(s,\t,\bt,\bx, z) \\
		&= \oint\dz \exp\Bigl({\ds-\sum_{m\nmid i} ({\bar t}'_i -\bar{t}_i)z^{-i}}\Bigr)\w(s',\t', \bt', \bx,z)\w^*(s,\t,\bt,\bx,z).
	\end{align*}
Now setting $\bar{t}_{i}'=\bar{t}_{i}$ for $m \nmid i$, $\t' = \t+[u\inv]$ and using 
\begin{equation}\label{xiuz}
\xi([u\inv],z) = \ds\sum_{i=1}^\infty \frac{u^{-i}}{i} \, z^i = -\log\left(1-\frac{z}{u}\right), 
\end{equation}
we get
	\begin{align*}
		\oint\dz & \Bigl(1-\frac{z}{u} \Bigr)\inv w(s,\t+[u\inv],\bt,\bx,z)w^*(s,\t,\bt,\bx,z) \\
		&= \oint\dz \w(s,\t+[u\inv], \bt,\bx,z)\w^*(s,\t,\bt,\bx,z).
	\end{align*}
Notice that $\w$ and $\w^*$ are formal power series of $z$, while $w-1$ and $w^*-1$ are formal power series of $z\inv$
(see \eqref{wavfn}).
Hence, the right-hand side of this equation vanishes, and
by Cauchy's formula, the left-hand side is
	\begin{equation*}
		u\Bigl(w(s,\t+[u\inv],\bt,\bx,u)w^*(s,\t,\bt,\bx,u) -1\Bigr) = 0.
	\end{equation*}
From this and \eqref{tau1}, we can derive \eqref{tau3}. 
Equation \eqref{tau4} is proved similarly.
\end{proof}

\begin{theorem}\label{betau}
A function\/ $\tau$ is a tau-function of the EBTH if and only if it satisfies
the following bilinear equation$:$
	\begin{equation}\label{beebth4}
	\begin{split}
		\oint & \ds\dz z^{nk+s'-s} e^{\xi(\t'-\t,z) -\frac12\xi_k(\t'-\t,z)} \\
		&\qquad\qquad\times \tau(s'-\xi(\ba,z^k), \t'-[z\inv], \bt', \bx + \ba) \\
		&\qquad\qquad\times \tau(s-\xi(\bb,z^k), \t+[z^{-1}],\bt,\bx+\bb) \\
		&=\oint\ds\dz z^{-nm+s'-s}e^{\xi(\bt-\bt',z^{-1}) - \frac12\xi_m(\bt-\bt',z^{-1})} \\
		&\qquad\qquad\times \tau(s'+1-\xi(\ba,z^{-m}), \t',\bt'+[z],\bx+\ba) \\
		&\qquad\qquad\times \tau(s-1-\xi(\bb,z^{-m}),\t,\bt-[z],\bx+\bb),
	\end{split}
	\end{equation}
for all\/ $\ba=(a_1,a_2,\dots)$, $\bb=(b_1,b_2,\dots)$, 
$n \in \Z_{\geq 0}$ and\/ $s-s' \in \Z$.
\end{theorem}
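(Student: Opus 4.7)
The plan is to establish both directions of the theorem by substituting the $\tau$-function and dual-wave-function representations \eqref{tau1}, \eqref{tau2}, \eqref{tau3}, \eqref{tau4} into the wave-function bilinear equation \eqref{beebth} of Theorem~\ref{beq}, and by reading the resulting computation in both directions.

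The first step is a routine calculation of the $\chi$ and $\bar\chi$ prefactors produced by the substitution. From the definition \eqref{chi} one verifies the compensation identities
\begin{align*}
\chi(s-\xi(\ba,z^k),\t,\bt,\bx+\ba,z) &= \chi(s,\t,\bt,\bx,z),\\
\bar\chi(s-\xi(\ba,z^{-m}),\t,\bt,\bx+\ba,z) &= \bar\chi(s,\t,\bt,\bx,z),
\end{align*}
because the $z^{\xi(\ba,z^k)}$ produced by the $\bx$-shift is exactly cancelled by the $z^{-\xi(\ba,z^k)}$ produced by the $s$-shift (and similarly with $z^{-m}$ in place of $z^k$). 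Consequently the $\chi\chi^{-1}$ product on the LHS of \eqref{beebth} reduces to $z^{s'-s}e^{\xi(\t'-\t,z)-\frac12\xi_k(\t'-\t,z)}$ and the $\bar\chi\bar\chi^{-1}$ product on the RHS reduces to $z^{s'-s}e^{\xi(\bt-\bt',z^{-1})-\frac12\xi_m(\bt-\bt',z^{-1})}$, matching the prefactors in \eqref{beebth4}.

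The second step clears the $\tau$-denominators produced by the ratios in \eqref{tau1}--\eqref{tau4}. When $\ba=\bb=\boldsymbol{0}$, both sides carry the common $z$-independent denominator $\tau(s',\t',\bt',\bx)\tau(s,\t,\bt,\bx)$, which can be pulled out of the residues and cancelled, yielding the $\ba=\bb=\boldsymbol{0}$ case of \eqref{beebth4} at once. For general $\ba,\bb$ the substituted LHS denominator $\tau(s'-\xi(\ba,z^k),\t',\bt',\bx+\ba)\tau(s-\xi(\bb,z^k),\t,\bt,\bx+\bb)$ is $z$-dependent and does not agree with its RHS counterpart obtained by replacing $z^k$ with $z^{-m}$, so the denominators no longer cancel inside the residues. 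I expect this to be the main obstacle. My plan is to reduce to the already-established $\ba=\bb=\boldsymbol{0}$ case by applying the commuting operators $\exp\bigl(\sum_\ell a_\ell(\partial_{x_\ell}-z^{\ell k}\partial_{s'})\bigr)$ and $\exp\bigl(\sum_\ell b_\ell(\partial_{x_\ell}-z^{\ell k}\partial_s)\bigr)$, each acting only on the corresponding $\tau$-factor (treating the $\bx$-argument of the two factors as independent parameters); these operators commute with the residue $\oint \tfrac{dz}{2\pi\ii}$ and simultaneously produce the $(\ba,\bb)$-shifted arguments in the $\tau$ numerators, with the denominators already cleared.

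For the converse, given $\tau$ satisfying \eqref{beebth4} I would define $\psi,\psibar,\psi^*,\psibar^*$ by \eqref{tau1}--\eqref{tau4}, substitute them back into \eqref{beebth4}, and reverse the above computation to recover \eqref{beebth}; Theorem~\ref{beq} then identifies $\psi,\psibar$ as the wave functions of a solution of the EBTH for which $\tau$ is a tau-function.
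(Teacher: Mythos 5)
Your prefactor computation and your treatment of the case $\ba=\bb=\boldsymbol 0$ are correct, and you have correctly located the crux in the $z$-dependent denominators. However, the device you propose for general $\ba,\bb$ does not work. The operators $\exp\bigl(\sum_\ell a_\ell(\p{x_\ell}-z^{\ell k}\p{s'})\bigr)$ depend on the integration variable $z$, so they do not ``commute with the residue'': from $\oint F\,\frac{dz}{2\pi\ii}=\oint G\,\frac{dz}{2\pi\ii}$ one cannot conclude $\oint (\p{x_\ell}-z^{\ell k}\p{s'})F\,\frac{dz}{2\pi\ii}=\oint(\p{x_\ell}-z^{\ell k}\p{s'})G\,\frac{dz}{2\pi\ii}$, since that would require $\oint z^{\ell k}(F-G)\,\frac{dz}{2\pi\ii}=0$, which is not given. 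Worse, the right-hand side of \eqref{beebth4} requires the shift $s'\mapsto s'-\xi(\ba,z^{-m})$, not $s'\mapsto s'-\xi(\ba,z^{k})$, so a single $z$-dependent operator cannot produce the correct arguments on both sides simultaneously; this $z^{\ell k}$ versus $z^{-\ell m}$ asymmetry is exactly what the paper resolves once and for all at the level of wave functions in the proof of \thref{beq}, where $\p{x_\ell}-z^{\ell k}\p{s'}$ on $\psi$ and $\p{x_\ell}-z^{-\ell m}\p{s'}$ on $\psibar$ are realized by the \emph{same} $z$-independent difference operator $Q_\ell$ in $s'$ (and $-P_\ell^*$ in $s$), which \leref{ogawa} allows one to insert on both sides. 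Finally, ``treating the $\bx$-arguments of the factors as independent parameters'' presupposes the identity with independent $\bx$-arguments, which is essentially what is to be proved.

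The missing idea is that no reduction to $\ba=\bb=\boldsymbol 0$ is needed: equation \eqref{beebth} already contains arbitrary $\ba,\bb$, and the denominators are cleared by \reref{zk}. After substituting \eqref{tau1}--\eqref{tau4} into \eqref{beebth}, the left-hand denominator $\tau(s'-\xi(\ba,z^k),\t',\bt',\bx+\ba)\,\tau(s-\xi(\bb,z^k),\t,\bt,\bx+\bb)$ depends on $z$ only through $z^k$, and the right-hand denominator is the \emph{same} function evaluated at $z^{-m}$. Replacing $z^{nk}$ by $f(z^k)$ on the left and $z^{-nm}$ by $f(z^{-m})$ on the right, with $f(u)=u^{n}\,\tau(s'-\xi(\ba,u),\t',\bt',\bx+\ba)\,\tau(s-\xi(\bb,u),\t,\bt,\bx+\bb)$, cancels both denominators at once and yields \eqref{beebth4}. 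Since the constant term of $f/u^n$ is $\tau(s',\t',\bt',\bx+\ba)\tau(s,\t,\bt,\bx+\bb)\neq 0$, this substitution is invertible within the family of equations indexed by $n$ (equivalently by $f$), which gives the equivalence in both directions; your outline of the converse is then fine. This is precisely the paper's short proof.
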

\begin{proof}
First, we plug in \eqref{beebth} the expressions for $\psi$, $\psibar$, $\psi^*$, $\psibar^*$ in terms of $\tau$
(see \eqref{tau1}, \eqref{tau2}, \eqref{tau3}, \eqref{tau4}).
Then, by \reref{zk}, we can replace $z^{nk}$ on the left-hand side of \eqref{beebth} by 
	\begin{equation*}
		z^{nk}\tau(s'-\xi(\ba,z^k), \t', \bt', \bx+\ba)\tau(s-\xi(\bb,z^k),\t,\bt, \bx+\bb),
	\end{equation*}
and $z^{-nm}$ on the right-hand side by 
	\begin{equation*}
		z^{-nm}\tau(s'-\xi(\ba,z^{-m}), \t',\bt',\bx+\ba)\tau(s-\xi(\bb,z^{-m}),\t,\bt,\bx+\bb).
	\end{equation*}
Therefore, \eqref{beebth} is equivalent to \eqref{beebth4}.
\end{proof}
If we apply the change of variables from \reref{rebth1}, we get the bilinear equation from \cite{CvdL13} (see $(85)$--$(87)$ there) as a special case of \eqref{beebth4} after setting $\bx = \boldsymbol 0$, 
$\ba = \bx'$, $\bb = \bx''$ in \eqref{beebth4}. Conversely, we can obtain \eqref{beebth4} from the bilinear equation of \cite{CvdL13} by observing that if $\tau(s,\t,\bt,\bx)$ is a tau-function for the EBTH, then so is $\tau(s,\t,\bt,\bx+\bm{c})$ for any constant $\bm{c}$. 


\subsection{Two difference Fay identities for the EBTH}


From \thref{betau}, we can derive the following difference Fay identities for the EBTH (cf.\ \cite{Tak08}).
We will again use the shift convention of \reref{rshift}.

	\begin{theorem}\label{thfay}
	If\/ $\tau$ is a tau-function of the EBTH, then for any\/ $\lambda, \mu \in \C^*$, 
	we have
			\begin{equation}\label{fay}
	\begin{split}
		\ds(\lambda&-\mu) \tau(s, \t, \bt, \bx) \tau(s-1,\t-[\lambda^{-1}] - [\mu^{-1}],\bt, \bx) \\
&= \lambda\,\tau(s,\t - [\lambda^{-1}],\bt,\bx)\tau(s-1, \t-[\mu^{-1}],\bt, \bx)\\ 
&- \mu\,\tau(s,\t - [\mu^{-1}],\bt, \bx)\tau(s-1, \t - [\lambda^{-1}],\bt,\bx)
	\end{split}
	\end{equation}
and
\begin{equation}\label{fay2}
\begin{split}
		(\lambda&-\mu) \tau(s+1,\t,\bt+[\lambda]+[\mu],\bx)\tau(s, \t,\bt,\bx) \\
		&= \lambda\,\tau(s+1,\t,\bt+[\lambda],\bx)\tau(s,\t,\bt+[\mu],\bx) \\ 
		&- \mu\,\tau(s+1,\t,\bt+[\mu],\bx)\tau(s,\t,\bt+[\lambda],\bx).
\end{split}
\end{equation}
	\end{theorem}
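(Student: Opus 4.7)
My plan is to specialize the bilinear equation \eqref{beebth4} of \thref{betau} and read off the Fay identities as coefficients of $z\inv$. The freedom afforded by \reref{zk} will be used to absorb the half-exponentials $e^{-\xi_k(\t'-\t,z)/2}$ and $e^{-\xi_m(\bt-\bt',z\inv)/2}$, after which a carefully chosen shift of $\t'$ (for \eqref{fay}) or $\bt'$ (for \eqref{fay2}) introduces a rational factor with simple poles at $z=\lambda$ and $z=\mu$; a formal Cauchy identity then extracts the three-term Fay identity.

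\textbf{Proof of \eqref{fay}.} Setting $\ba=\bb=\boldsymbol{0}$, $n=0$, $s'=s-1$, I substitute $\t'=\t+[\lambda\inv]+[\mu\inv]$. To respect the constraint $t_{nk}=\bar{t}_{nm}$, this substitution forces $\bar{t}'_{nm}-\bar{t}_{nm}=\lambda^{-nk}/(nk)+\mu^{-nk}/(nk)$ while $\bar{t}'_i=\bar{t}_i$ for $m\nmid i$, so $\bt'-\bt$ is supported only on multiples of $m$. Invoking \reref{zk} with $f(w)=\exp\bigl(\tfrac{1}{2}\sum_{n\geq 1}(\bar{t}'_{nm}-\bar{t}_{nm})w^n\bigr)$, which by the constraint satisfies $f(z^k)=e^{\xi_k(\t'-\t,z)/2}$ and $f(z^{-m})=e^{\xi_m(\bt'-\bt,z\inv)/2}$, cancels both half-exponentials; since $\bt'-\bt$ is supported on multiples of $m$, the residual RHS exponential reduces to $1$. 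The bilinear equation then takes the clean form
\begin{multline*}
\oint\dz\,z\inv(1-z/\lambda)\inv(1-z/\mu)\inv\tau(s-1,\t'-[z\inv],\bt',\bx)\tau(s,\t+[z\inv],\bt,\bx)\\
=\oint\dz\,z\inv\tau(s,\t',\bt'+[z],\bx)\tau(s-1,\t,\bt-[z],\bx).
\end{multline*}
The RHS evaluates to $\tau(s,\t+[\lambda\inv]+[\mu\inv],\bt,\bx)\tau(s-1,\t,\bt,\bx)$ as the constant term in $z$, with $\bt'$ identified with $\bt$ via the convention of \reref{rshift}. For the LHS, I use the partial fraction decomposition
\begin{equation*}
\frac{1}{(1-z/\lambda)(1-z/\mu)}=\frac{1}{\mu-\lambda}\Bigl(\frac{\mu}{1-z/\lambda}-\frac{\lambda}{1-z/\mu}\Bigr),
\end{equation*}
together with the formal Cauchy identity stating that for any formal Laurent series $T(z)=\sum_j T_jz^{-j}$ in $z\inv$, the coefficient of $z\inv$ in $z\inv(1-z/a)\inv T(z)$ equals $T(a)$. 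Taking $T(z)=\tau(s-1,\t'-[z\inv],\bt',\bx)\tau(s,\t+[z\inv],\bt,\bx)$, the LHS reduces to $(\lambda T(\mu)-\mu T(\lambda))/(\lambda-\mu)$, and the shifts simplify via $\t'-[\lambda\inv]=\t+[\mu\inv]$ and $\t'-[\mu\inv]=\t+[\lambda\inv]$. A final translation $\t\mapsto\t-[\lambda\inv]-[\mu\inv]$ then yields \eqref{fay}.

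\textbf{Second identity and main obstacle.} The derivation of \eqref{fay2} is the mirror image: substitute $\bt'=\bt-[\lambda]-[\mu]$ (with the constraint-induced shift $t'_{nk}-t_{nk}=-\lambda^{nm}/(nm)-\mu^{nm}/(nm)$), apply the analogous \reref{zk} trick, and obtain a rational factor $(1-\lambda/z)\inv(1-\mu/z)\inv$ on the RHS of the simplified bilinear equation. An identical Cauchy-style residue computation on the RHS, followed by the global translation $s\mapsto s+1$, produces \eqref{fay2}. The main technical obstacle throughout is the careful bookkeeping of the constraint $t_{nk}=\bar{t}_{nm}$ together with the convention of \reref{rshift} across each substitution, in order that the tau-function arguments collapse to the compact form displayed in \eqref{fay} and \eqref{fay2}; once this is handled, the remainder of the argument follows the standard template for difference Fay identities of the 2D Toda hierarchy.
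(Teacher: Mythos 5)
Your proposal is correct and follows essentially the same route as the paper: specialize the bilinear equation \eqref{beebth4} with $n=0$, $\ba=\bb=\boldsymbol{0}$, a unit shift in $s$, and $\t'=\t+[\lambda\inv]+[\mu\inv]$ (resp.\ $\bt'=\bt-[\lambda]-[\mu]$), use \reref{zk} to reduce the exponential prefactors, and extract the three-term identity by partial fractions and a residue computation. The only (immaterial) difference is your choice $s'=s-1$ in place of the paper's $s'=s+1$, which makes the $(\lambda-\mu)\tau\tau$ term come from the opposite side of the bilinear identity and removes the need for the final shift in $s$.
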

\begin{proof}
Using the same trick as in the proof of \prref{stars}, we can rewrite the bilinear equation \eqref{beebth4} as
		\begin{align*}
			\oint &\ds\dz z^{nk+s'-s} e^{\xi(\t'-\t,z)}
			\tau(s'-\xi(\ba,z^k), \t'-[z\inv], \bt', \bx+\ba) \\
			&\qquad\times\tau(s-\xi(\bb,z^k), \t+[z^{-1}],\bt, \bx+\bb) \\
			&=\oint\ds\dz z^{-nm+s'-s}\exp\Bigl({\ds-\sum_{m\nmid i} (\bar{t}'_i -\bar{t}_i)z^{-i}}\Bigr) \\
			&\qquad\times\tau(s'+1-\xi(\ba,z^{-m}), \t',\bt'+[z],\bx+\ba) \\
			&\qquad\times\tau(s-1-\xi(\bb,z^{-m}), \t,\bt-[z], \bx+\bb).
		\end{align*}
Then setting 
\begin{equation*}
n =0, \quad s' - s = 1, \quad \ba = \bb = \boldsymbol 0, \quad \t' = \t + [\lambda^{-1}] + [\mu^{-1}], \quad \bar{t}_{i}' = \bar{t}_{i},
\end{equation*}
for $m \nmid i$ gives
\begin{align*}
\oint &\ds\dz \ds\frac{z}{(1-z\lambda^{-1})(1-z\mu^{-1})}\\
&\qquad\quad\times\tau(s+1, \t + [\lambda^{-1}] + [\mu^{-1}] - [z^{-1}], \bt, \bx) \tau(s, \t + [z^{-1}], \bt, \bx)\\
&= \oint \ds\dz z\,\tau(s+2,\bt+[\lambda^{-1}] + [\mu^{-1}] + [z], \bt, \bx)\tau(s-1, \t - [z], \bt, \bx) \\
&= 0.
\end{align*}
To compute the residue in the left side, we use 
\begin{equation*}
\ds\frac{z}{(1-z\lambda\inv)(1-z\mu\inv)} = \ds\frac{1}{\lambda\inv - \mu\inv}\left(\ds\frac{1}{1-z\lambda\inv} - \ds\frac{1}{1-z\mu\inv}\right)
\end{equation*}
and 
\begin{equation*}
\oint\dz \ds\frac{f(z)}{1-z\lambda\inv} = \lambda(f(\lambda) - f_0), \qquad\text{if}\quad
f(z) = \ds\sum_{i=0}^\infty f_{i}z^{-i} .
\end{equation*}
We obtain
\begin{align*}
(\lambda &- \mu)\tau(s+1, \t + [\lambda^{-1}] + [\mu^{-1}],\bt, \bx)\tau(s,\t,\bt,\bx)\\
&-\lambda\,\tau(s+1,\t+[\lambda^{-1}],\bt, \bx)\tau(s, \t + [\mu^{-1}],\bt, \bx)\\ 
&+ \mu\,\tau(s+1, \t + [\mu^{-1}],\bt, \bx)\tau(s, \t + [\lambda^{-1}],\bt, \bx) = 0,
\end{align*}
which gives \eqref{fay} after the shift $s \mapsto s -1$, $\t \mapsto \t - [\lambda^{-1}] - [\mu^{-1}]$.
Equation \eqref{fay2} can be proved similarly, by making the substitution $s' = s - 1$, $\bt' = \bt - [\lambda] - [\mu]$ and $t_{i}' = t_{i}$ for $k \nmid i$ in \eqref{beebth4}.
\end{proof}


\section{Darboux transformations on the tau-function}\label{s3}

In this section, we first review the action of Darboux transformations on the Lax operator and wave function of the EBTH. Then, using the Fay identitity \eqref{fay}, we determine the action of Darboux transformations on the tau-function. 


\subsection{Darboux transformations of $L$ and $\psi$} 

A \emph{Darboux transformation} of a differential or difference operator $L$ is defined by factoring $L= QP$ and then switching the two factors: $L\mapsto L^{[1]} = PQ$ (see, e.g., \cite{AvM94, BHY296, GHZ, MS, MZ97}).
If $\phi$ is an eigenfunction of $L$ with $L\phi=\la\phi$, then $\phi^{[1]} = P\phi$ is an eigenfunction of $L^{[1]}$ with $L^{[1]}\phi^{[1]}=\la\phi^{[1]}$. If we repeat this process, the eigenfunction $\phi^{[N]}$ obtained after $N$ Darboux transformations can be given in terms of a Wronskian of the initial eigenfunction.

Darboux transformations for the ETH were first considered by G. Carlet in \cite{Car03}, 
and a generalization to the EBTH was given in \cite{LS16}. The following theorem is equivalent to Theorem 3.4 from \cite{LS16} and gives a formula for the wave function, $\psi^{[N]}$, and Lax operator, $L^{[N]}$, after $N$ iterations of the Darboux transformation. 
In order to state the theorem, we need to introduce some notation.

We will suppose that $\U\subset\C$ is an open set such that the wave function $\psi(s,\t,\bt,\x,z)$ is defined for $z\in\U$, i.e., the formal power series $w(z)$ from \eqref{wavfn} is convergent for $z\in\U$.
Then for $z_i\in\U$, we will denote $\psi_i=\psi|_{z=z_i}$.
We define the discrete Wronskian of functions $f_i = f_i(s)$ by
\begin{equation*}
		\Wr_\La(f_1, f_2, \dots, f_n) = \begin{vmatrix} f_1 & f_2 & \cdots & f_n \\ \La\inv (f_1) & \La\inv (f_2) & \cdots & \La\inv (f_n) \\ \vdots & \vdots & \cdots &\vdots \\ \La^{-n +1}(f_1) & \La^{-n+1}(f_2) & \cdots & \La^{-n+1}(f_n)\end{vmatrix} .
	\end{equation*}
	
	\begin{theorem}[\cite{LS16}]\label{DT}
	Let\/ $\psi$ be a wave function for the EBTH and\/ $L$ its corresponding Lax operator. For fixed\/ $N\ge 1$ and $z_1,\dots,z_N\in\U$, consider the difference operator\/ $P^{[N]}$ defined by
	\begin{equation*}
		P^{[N]}f = (-1)^N\ds\frac{\Wr_\La(\psi_1, \dots, \psi_N, f)}{\Wr_\La(\La\inv(\psi_1), \dots, \La\inv(\psi_N))} \,,
		\end{equation*}
where\/ $\psi_i=\psi|_{z=z_i}$. Then
		\begin{equation*}
		L^{[N]} = P^{[N]} L(P^{[N]})\inv, \qquad
		\psi^{[N]} = P^{[N]}\psi
		\end{equation*}
are a Lax operator and wave function for the EBTH, which are obtained from\/ $L$ and\/ $\psi$ after\/ $N$ Darboux transformations.
	\end{theorem}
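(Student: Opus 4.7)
The plan is to proceed by induction on $N$, which reduces the problem to the case $N=1$: one verifies (by direct manipulation of the Wronskian formula, which is essentially a discrete Crum identity) that $P^{[N]}=\bar P\, P^{[N-1]}$, where $\bar P$ is the one-step Darboux operator built from the transformed wave function $P^{[N-1]}\psi|_{z=z_N}$. Once a single Darboux step preserves the class of EBTH solutions, iteration handles all $N$ steps. For $N=1$, write $\phi=\psi_1=\psi|_{z=z_1}$ and observe that $P:=P^{[1]}$ specializes to $P=1-(\phi/\Lambda^{-1}\phi)\Lambda^{-1}$, the unique monic first-order operator (in negative shift) annihilating $\phi$. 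Set $L'=PLP^{-1}$ and $\psi'=P\psi$. The eigenequation $L'\psi'=z^k\psi'$ is immediate from $L'P=PL$. Showing that $L'$ has the Lax form of \seref{s1}, namely $\Lambda^k+\cdots+u'_{-m}\Lambda^{-m}$ with $u'_{-m}\neq 0$, is a direct computation of leading and trailing coefficients of $PLP^{-1}$ using $P\in 1+\A_-\cap\A_\fin$ and $P^{-1}\in 1+\A_-$.

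The core of the proof is verifying that $\psi'$ satisfies the flow equations \eqref{psiflow}. For each flow $\partial$, the relevant equation reads $\partial\psi=M\psi$, where $M$ is one of $(L^{n/k})_+$, $-(L^{n/m})_-$, $A_n$, or $L^n\partial_s+P_n$; denote by $M'$ the analogous operator built from $L'$. Then $\partial\psi'=M'\psi'$ is equivalent to the operator identity
\begin{equation*}
\partial P = M'P-PM.
\end{equation*}
I would establish this by a standard annihilation-plus-order argument. First, differentiating $P\phi=0$ gives $(\partial P)\phi=-P(\partial\phi)=-PM\phi$, where I use that $M$ is $z$-independent, so $\partial\psi=M\psi$ remains valid upon setting $z=z_1$. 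Hence the operator $R:=\partial P-M'P+PM$ satisfies $R\phi=0$. Second, $R$ sits in a restricted subspace of $\A$: by the splitting of the algebra into $\pm$-parts and the definition of $M'$ via the $\pm$-projections of $(L')^{n/k}$, the combination $M'P-PM+\partial P$ can be shown to have support of the same form as $P$ (essentially first order in $\Lambda^{-1}$). An operator with this support annihilating a generically nonvanishing function must vanish, giving $R=0$.

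The hardest step will be the logarithmic flow $\partial_{x_n}$, where $M=L^n\partial_s+P_n$ mixes $\partial_s$ with $\log L$. For this, one must first establish the transformation rule
\begin{equation*}
\log L'=P(\log L)P^{-1}-(\partial_s P)P^{-1},
\end{equation*}
which follows from differentiating $L'=PLP^{-1}$ together with the definition of $\log L$ in terms of $W,\W$. The extra ``cocycle'' term $(\partial_s P)P^{-1}$ is precisely what is needed to align $P'_n$ with $PP_nP^{-1}$ modulo the $\pm$-projection, and thereby yields $\partial_{x_n} P=(L^n\partial_s+P'_n)P-P(L^n\partial_s+P_n)$. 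Keeping track of the $\pm$-projections across both the $\partial_s$ and $\log L$ contributions, and confirming that the ``finite'' part of the algebra is preserved at each bookkeeping step, is the principal obstacle; once it is handled, the annihilation argument of the previous paragraph closes the proof.
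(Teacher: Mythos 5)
The paper does not actually prove this theorem: it is imported from \cite{LS16}, and the text explicitly defers the proof to \cite{LS16} and \cite{Y19}, recording only the one-step mechanism ($P^{[1]}\psi_1=0$, the factorization $L-z_1^kI=Q^{[1]}P^{[1]}$ via \cite{BGSZ15}, and $L^{[1]}-z_1^kI=P^{[1]}Q^{[1]}$) together with the Crum-type iteration you also describe. So there is no in-paper argument to compare yours against, and your proposal must stand on its own. Its skeleton (induction via $P^{[N]}=\bar P\,P^{[N-1]}$, then a one-step intertwining identity $\partial P=M'P-PM$ for each flow) is the standard and correct strategy, but two of your key steps are stated in a form that does not work. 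First, the annihilation-plus-order argument: you claim $R=\partial P-M'P+PM$ has ``support of the same form as $P$'' and that such an operator annihilating a generically nonvanishing function must vanish. As stated this is false --- $P$ itself is supported on $\{\La^{0},\La^{-1}\}$ and annihilates $\phi$. The argument only closes if you show $R$ is supported on the \emph{single} power $\La^{-1}$, which requires the two-sided degree count: write $M'P-PM$ once using the $(\cdot)_+$ parts (degrees $\ge -1$) and once, via $(L')^{n/k}P=PL^{n/k}$, using the $(\cdot)_-$ parts (degrees $\le -1$). Second, finiteness: $P^{-1}$ is an infinite series in $\La^{-1}$, so ``computing trailing coefficients of $PLP^{-1}$'' does not show $L^{[1]}\in\A_\fin$ with lowest term $u^{[1]}_{-m}\La^{-m}\ne0$. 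You need the division step the paper cites: $P\psi_1=0$ and $L\psi_1=z_1^k\psi_1$ give $L-z_1^kI=QP$ with $Q$ finite, whence $L^{[1]}=PQ+z_1^kI\in\A_\fin$.

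The more substantive omission is the second dressing operator. The EBTH structure requires $\W\in\A_+$ with $L=\W\La^{-m}\W\inv$ alongside $W$, and $\W^{[1]}$ is needed even to \emph{define} $(L^{[1]})^{n/m}$, $\log L^{[1]}$, $A_n^{[1]}$ and $P_n^{[1]}$, i.e.\ the operators $M'$ appearing in your intertwining identities. Your proposal never constructs it, and it is not simply $P\W$: with $P=1-a\La^{-1}$ the product $P\W$ has a nonzero $\La^{-1}$ coefficient, hence lies outside $\A_+$. Identifying $\W^{[1]}$ and verifying its flows is a genuine part of the proof, not bookkeeping. This also undercuts your proposed rule $\log L'=P(\log L)P^{-1}-(\partial_s P)P^{-1}$: because $\log L$ is the symmetric combination $-\tfrac12(\partial_sW)W\inv+\tfrac12(\partial_s\W)\W\inv$, if both dressings transformed by left multiplication by the same $P$ the cocycle terms would cancel entirely, and since $\W^{[1]}\ne P\W$ the actual cocycle has to be computed from the correct $\W^{[1]}$ rather than posited. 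Until these points are supplied, the logarithmic-flow step --- which you rightly identify as the crux --- is not established.
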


To illustrate the theorem, consider the case of a single Darboux transformation. Then
\begin{equation}\label{psi1wr}
\begin{split}
\psi^{[1]} &= -\frac{\Wr_\La(\psi_1, \psi)}{\Wr_\La(\La\inv(\psi_1))} 
= -\frac{1}{\La\inv(\psi_1)} \begin{vmatrix} \psi_1 & \psi \\ \La\inv(\psi_1) & \La\inv(\psi) \end{vmatrix} \\
&= \psi - \frac{\psi_1}{\La\inv(\psi_1)} \La\inv(\psi), 
\qquad\text{where}\quad \psi_1=\psi|_{z=z_1}.
\end{split}
\end{equation}
Hence,
\begin{equation*}
P^{[1]} = I - \frac{\psi_1}{\La\inv(\psi_1)} \La\inv,
\end{equation*}
where $I$ denotes the identity operator (cf.\ \cite{Car03}).
Notice that $L\psi_1 = z_1 \psi_1$ and $P^{[1]} \psi_1 = 0$. Hence, by \cite[Theorem 2.3]{BGSZ15}, the difference operator $L-z_1 I$ 
factors as
\begin{equation*}
L-z_1 I = Q^{[1]} P^{[1]}
\end{equation*}
for some difference operator $Q^{[1]}$. Then the new Lax operator $L^{[1]}$ is obtained from the Darboux transformation
\begin{equation*}
L^{[1]}-z_1 I = P^{[1]} Q^{[1]},
\end{equation*}
and will have a wave function $\psi^{[1]}$. 

The fact that $L^{[1]}$ and $\psi^{[1]}$ are again solutions of the EBTH is one of the claims of \thref{DT} (see also \cite{Y19}).
The next Darboux transformation is done the same way, by starting from $L^{[1]}$, $\psi^{[1]}$ and $z_2$ in place of $L$, $\psi$ and $z_1$, respectively. The significance of \thref{DT} is that, after $N$ steps, the Lax operator $L^{[N]}$ and wave function $\psi^{[N]}$ can be expressed only in terms of the initial $L$ and $\psi$.
We refer to \cite{Y19} for more details and for a proof of \thref{DT} different from that of \cite{LS16}.

\subsection{Action of Darboux transformations on $\tau$} 

Using \thref{DT} and the Fay identity \eqref{fay}, we will prove that the action of a Darboux transformation on the tau-function is given by the \emph{vertex operator}
		\begin{equation}\label{vtx1}
		\Gamma_{+}(z) = e^{-\partial_s} e^{\xi(\t, z)} \exp\Bigl(-\ds\sum_{n=1}^\infty \ds\frac{\partial_{t_n}}{n}z^{-n}\Bigr).
		\end{equation}
Note that 
$\exp\bigl(-\sum_{n=1}^\infty \frac{\partial_{t_n}}{n}z^{-n}\bigr)$
acts as the shift operator $\t\mapsto\t-[z\inv]$,
while $e^{-\partial_s}=\La\inv$ acts as the shift $s\mapsto s-1$.

	\begin{theorem}\label{DTtau}
	Let\/ $\psi$ be a wave function for the EBTH, and\/ $\psi^{[1]}$ be the wave function after one Darboux transformation on\/ $\psi$
	$($see \eqref{psi1wr}$)$. Let\/ $\tau$ and\/ $\tau^{[1]}$ be their corresponding tau-functions. 
	Then\/ $\tau^{[1]}= \Gamma_+(z_1)\tau$, i.e.,
\begin{equation}\label{tau1DT}
\tau^{[1]}(s,\t,\bt,\bx) = e^{\xi(\t, z_1)} \tau(s-1,\t-[z_1\inv],\bt,\bx).
\end{equation}
	\end{theorem}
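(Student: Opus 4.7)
My plan is to verify directly that the function $\tau^{[1]}$ defined by \eqref{tau1DT} satisfies the wave-function identity \eqref{tau1} when paired with $\psi^{[1]}$, and to reduce the resulting algebraic identity to the Fay identity \eqref{fay}. Since the tau-function of a solution is determined (up to harmless constants) by its wave function via \eqref{tau1}, this suffices.

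First I would use \eqref{psi1wr} together with \eqref{tau1} and the observations $\La\inv \chi = z\inv \chi$, $\La\inv \chi_1 = z_1\inv \chi_1$ to rewrite
\begin{equation*}
\psi^{[1]} = \psi - \frac{\psi_1}{\La\inv(\psi_1)}\,\La\inv(\psi)
\end{equation*}
in terms of $\tau$. This gives
\begin{equation*}
\psi^{[1]} = \frac{\chi}{\tau(s,\t,\bt,\bx)\tau(s-1,\t-[z_1\inv],\bt,\bx)} \bigl(A - B\bigr),
\end{equation*}
where
\begin{align*}
A &= \tau(s,\t-[z\inv],\bt,\bx)\,\tau(s-1,\t-[z_1\inv],\bt,\bx), \\
B &= \frac{z_1}{z}\,\tau(s,\t-[z_1\inv],\bt,\bx)\,\tau(s-1,\t-[z\inv],\bt,\bx).
\end{align*}

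Next I would compute the right-hand side of \eqref{tau1} with $\tau^{[1]}$ in place of $\tau$. Using \eqref{xiuz}, one has $\xi([z\inv],z_1) = -\log(1-z_1/z)$, so
\begin{equation*}
\frac{\tau^{[1]}(s,\t-[z\inv],\bt,\bx)}{\tau^{[1]}(s,\t,\bt,\bx)} = \Bigl(1-\frac{z_1}{z}\Bigr)\frac{\tau(s-1,\t-[z\inv]-[z_1\inv],\bt,\bx)}{\tau(s-1,\t-[z_1\inv],\bt,\bx)}.
\end{equation*}
After multiplying by $\chi$ and comparing with the expression for $\psi^{[1]}$ above, the required equality collapses (after multiplying through by $z$) to
\begin{equation*}
zA - z_1\cdot\tfrac{z}{z_1}B = (z-z_1)\,\tau(s,\t,\bt,\bx)\,\tau(s-1,\t-[z\inv]-[z_1\inv],\bt,\bx),
\end{equation*}
which is precisely \eqref{fay} with $\lambda = z$ and $\mu = z_1$.

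The computation is entirely symbolic and the only potentially confusing step is bookkeeping the shifts in $s$ and $\t$ (in particular, tracking that $\La\inv$ acting on $\chi$ introduces exactly the factor $z\inv$ that pairs with the $z_1$ from $\psi_1/\La\inv(\psi_1)$). I do not foresee any real obstacle beyond this; the conceptual content is entirely contained in \thref{thfay}. One should remark at the end that since $\psi^{[1]}$ determines the new wave operator $W^{[1]}$ and hence $\tau^{[1]}$ up to a function of $(s,\bt,\bx)$ only, the formula \eqref{tau1DT} is the unique candidate consistent with \eqref{tau1}, and a symmetric argument using \eqref{fay2} would verify the corresponding relation \eqref{tau2} for $\bar\psi^{[1]}$, confirming that \eqref{tau1DT} is indeed the tau-function of the Darboux-transformed solution.
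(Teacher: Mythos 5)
Your proposal is correct and follows essentially the same route as the paper: express $\psi^{[1]}$ in terms of $\tau$ via \eqref{psi1wr} and \eqref{tau1}, compare with the expression obtained by substituting $\Gamma_+(z_1)\tau$ into \eqref{tau1}, and observe that their equality is exactly the Fay identity \eqref{fay} with $\lambda=z$, $\mu=z_1$. Your closing remark on the uniqueness of $\tau^{[1]}$ given $\psi^{[1]}$ (up to factors independent of $\t$) is a sensible point that the paper leaves implicit, but it does not change the substance of the argument.
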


\begin{proof}
Using \eqref{psi1wr}, \eqref{tau1} and $\La\inv(\chi)=z\inv\chi$, we express $\psi^{[1]}$ in terms of $\tau$ as follows:
	\begin{equation}\label{psi1}
	\begin{split}
	\psi^{[1]} &=\ds\frac{\chi}{\tau(s,\t,\bt,\bx)\tau(s-1,\t-[z_1\inv],\bt,\bx)} \\[6pt]
	& \quad \times \Bigl(\tau(s,\t-[z\inv],\bt,\bx)\tau(s-1,\t-[z_1\inv],\bt,\bx)\\
	&\qquad\quad - z\inv z_1 \, \tau(s,\t-[z_1\inv],\bt,\bx)\tau(s-1,\t-[z\inv],\bt,\bx) \Bigr).
	\end{split}
	\end{equation}
On the other hand, 
again by \eqref{tau1}, 
	\begin{equation*}
	\psi^{[1]} = \ds\frac{\tau^{[1]}(s,\t-[z\inv],\bt,\bx)}{\tau^{[1]}(s,\t,\bt,\bx)}\chi.
	\end{equation*}
Substituting $\tau^{[1]} = \Gamma_+(z_1)\tau$ into the right side of this equation gives
	\begin{equation}\label{psi2}
	\ds\frac{(1-z\inv z_1) \tau(s-1,\t-[z\inv]-[z_1\inv],\bt,\bx)}{\tau(s-1,\t-[z_1\inv],\bt,\bx)} \chi,
	\end{equation}
where we used that, by \eqref{xiuz},
\begin{equation}\label{xiuz2}
e^{\xi(\t-[z\inv], z_1)} = e^{\xi(\t, z_1)} e^{-\xi([z\inv], z_1)} = e^{\xi(\t, z_1)} (1-z\inv z_1).
\end{equation}
If we set $\lambda = z$, $\mu = z_1$ in the Fay identity \eqref{fay}, 
we see that the above two expressions \eqref{psi1} and \eqref{psi2} are equal. Therefore, $\tau^{[1]}= \Gamma_+(z_1)\tau$.
\end{proof}

If we do $N$ Darboux transformations of $\tau$, we can apply \thref{DTtau} repeatedly to obtain the tau-function
\begin{equation}\label{tauN}
\tau^{[N]} = \Gamma_+(z_N) \cdots \Gamma_+(z_2) \Gamma_+(z_1) \tau,
\end{equation}
which corresponds to the Lax operator $L^{[N]}$ and wave function $\psi^{[N]}$ from \thref{DT}.
The product of vertex operators in \eqref{tauN} is well known (see, e.g., \cite[Chapter 14]{KacI}) 
and easy to compute using \eqref{tau1DT} and \eqref{xiuz2}. It follows that
	\begin{equation}\label{tauN2}
	\tau^{[N]}(s,\t,\bt,\bx) = 
	V_N
	e^{\sum_{i=1}^N\xi(\t, z_i)}
	\tau(s-N, \t-[z_1\inv]-\cdots-[z_N\inv], \bt, \bx),
	\end{equation}
where
\begin{equation}\label{VN}
V_N =\ds\prod_{1\leq i < j \leq N}\Bigl(1-\ds\frac{z_i}{z_j}\Bigr).
\end{equation}
One can verify directly that, for any tau-function $\tau$ of the EBTH, the function $\tau^{[N]}$ given by \eqref{tauN2} satisfies the bilinear equations \eqref{beebth4} and hence is a tau-function of the EBTH as well. 

\begin{remark}\label{minus} 
The authors of \cite{LS16} also give Darboux transformations on the second wave function, which is denoted $\psibar$ here. In this case, the action of the Darboux transformation on the tau-function is given by the vertex operator
\begin{equation}\label{vtx2}
\Gamma_-(z) = z^se^{\p{s}}e^{-\xi(\bt,z\inv)}\exp\Bigl(\sum_{n=1}^\infty\ds\frac{\partial_{\bar{t}_n}}{n}z^n\Bigr).
\end{equation}
The proof of this claim is very similar to the proof of \thref{DTtau} and uses \eqref{fay2} instead of \eqref{fay}; see \cite{Y19}. 
\end{remark}

As above, one can also use the bilinear equation \eqref{beebth4} to show directly that if $\tau$ is a tau-function for the EBTH, then $\Gamma_-(z_1)\tau$ is as well. We conclude that
\begin{equation*}
\Gamma_{\epsilon_N}(z_N)\cdots\Gamma_{\epsilon_1}(z_1)\tau
\end{equation*}
is a tau-function for the EBTH for any choice of signs $\epsilon_i = \pm$ (cf.\ \cite[Chapter 14]{KacI}).

\section{Generalized Fay identities}\label{s4}
In this section, as an application of Theorems \ref{DT} and \ref{DTtau}, we derive generalized difference Fay identities for the EBTH (see \cite{ASvM94} for the case of KP hierarchy). 
We will continue to use the notation of \seref{s3}.

\begin{theorem}\label{genfay}
Let\/ $\psi$ be a wave function for the EBTH with a corresponding tau-function\/ $\tau$, and let\/ $\psi_i = \psi\rvert_{z=z_i}$. Then 
		\begin{equation}\label{fay3}
		\begin{split}
		\ds\Wr_\La(\psi_1&, \ldots, \psi_{N}) = \chi_1\cdots\chi_{N} \prod_{1 \leq i < j \leq N} (z_j\inv - z_i\inv)
		\\
		&\times \frac{\tau(s-N+1,\t-[z_1\inv] - \cdots - [z_{N}\inv],\bt,\bx)}{\tau(s-N+1,\t,\bt,\bx)} \,,
		\end{split}
	\end{equation}
	where\/ 
	$\chi_i = \chi\rvert_{z=z_i}$.
	\end{theorem}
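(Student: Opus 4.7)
The plan is to proceed by induction on $N$. The base case $N=1$ follows immediately from the $\tau$-function formula \eqref{tau1} applied to $\psi_1$, since $\Wr_\La(\psi_1)=\psi_1$ and the product $\prod_{1\le i<j\le 1}$ is empty. For the inductive step, the key idea is to evaluate the $(N-1)$-fold Darboux-transformed wave function $\psi^{[N-1]}=P^{[N-1]}\psi$ at $z=z_N$ in two different ways and equate.

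On one hand, \thref{DT} applied with parameters $z_1,\dots,z_{N-1}$ and then specialized to $z=z_N$ gives the Wronskian expression
\begin{equation*}
\psi^{[N-1]}\bigl|_{z=z_N} \;=\; (-1)^{N-1}\,\frac{\Wr_\La(\psi_1,\dots,\psi_N)}{\La\inv\Wr_\La(\psi_1,\dots,\psi_{N-1})}\,,
\end{equation*}
using the simple observation that replacing each entry $f_j$ by $\La\inv f_j$ in the discrete Wronskian is the same as applying $\La\inv$ to the whole determinant. On the other hand, \eqref{tau1} applied to $\tau^{[N-1]}$, combined with the explicit product formula \eqref{tauN2} for $\tau^{[N-1]}$ and the identity $\xi(\t-[u\inv],z) = \xi(\t,z)+\log(1-z/u)$ from \eqref{xiuz2}, yields after the specialization $z=z_N$ the expression
\begin{equation*}
\psi^{[N-1]}\bigl|_{z=z_N} \;=\; \chi_N\prod_{i=1}^{N-1}\Bigl(1-\frac{z_i}{z_N}\Bigr)\,\frac{\tau(s-N+1,\t-[z_1\inv]-\cdots-[z_N\inv],\bt,\bx)}{\tau(s-N+1,\t-[z_1\inv]-\cdots-[z_{N-1}\inv],\bt,\bx)}\,.
\end{equation*}

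Equating the two displays gives a recursion for $\Wr_\La(\psi_1,\dots,\psi_N)$ in terms of $\La\inv\Wr_\La(\psi_1,\dots,\psi_{N-1})$. Applying the inductive hypothesis to the latter and using that $\La\inv\chi_i = z_i\inv\chi_i$ (since $\chi_i$ depends on $s$ only through $z_i^s$), the two tau-function ratios telescope to the numerator and denominator claimed in \eqref{fay3}. The remaining scalar prefactor simplifies by the elementary identity
\begin{equation*}
(-1)^{N-1}\,\frac{1}{z_1\cdots z_{N-1}}\prod_{i=1}^{N-1}\Bigl(1-\frac{z_i}{z_N}\Bigr) \;=\; \prod_{i=1}^{N-1}(z_N\inv-z_i\inv)\,,
\end{equation*}
which combines with the inductive product $\prod_{1\le i<j\le N-1}(z_j\inv-z_i\inv)$ to produce the full Vandermonde factor $\prod_{1\le i<j\le N}(z_j\inv-z_i\inv)$, completing the induction.

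The argument is essentially a direct computation once both expressions for $\psi^{[N-1]}|_{z=z_N}$ are in hand; the main obstacle is bookkeeping — tracking the $\La\inv$ shifts acting on $\chi_i$, the signs introduced by rewriting $z_i-z_N$ as $-(z_N-z_i)$, and reconciling the product $\prod(1-z_i/z_j)$ from \eqref{VN} with the product $\prod(z_j\inv-z_i\inv)$ that appears in the target formula.
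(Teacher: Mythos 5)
Your proposal is correct and follows essentially the same route as the paper's own proof: induction on $N$, equating the Wronskian expression for the Darboux-transformed wave function from \thref{DT} evaluated at the new spectral point with the expression obtained from \eqref{tau1} applied to $\tau^{[N]}$ via the product formula \eqref{tauN2}, and then telescoping the tau-function ratios and simplifying the scalar prefactor. The only differences are cosmetic (an index shift $N-1$ versus $N$, and writing the denominator as $\La\inv$ applied to the whole Wronskian rather than as the Wronskian of the shifted entries).
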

	
In this theorem, $z_1,\dots,z_N$ are complex numbers in a certain domain $\U\subset\C$, in which $\psi$ is defined. Alternatively, equation \eqref{fay3} makes sense as an identity of formal power series in $z_1\inv,\dots,z_N\inv$, if we write $\psi=w\chi$ for a formal power series $w$ in $z\inv$ (see \eqref{wavfn}), while the exponentials in $\chi$ are not expanded.

\begin{proof}[Proof of \thref{genfay}]
We will prove the claim by induction on $N$.
The case $N=1$ reduces to \eqref{tau1} for $z=z_1$, since $\Wr_\La(\psi_1) = \psi_1$. 
Now suppose that \eqref{fay3} holds for some $N\ge1$.

By \thref{DT}, we have
\begin{equation*}
\psi^{[N]} = (-1)^N\ds\frac{\Wr_\La(\psi_1, \dots, \psi_N, \psi)}{\Wr_\La(\La\inv(\psi_1), \dots, \La\inv(\psi_N))} \,.
\end{equation*}
After setting $z=z_{N+1}$, we obtain
\begin{equation*}
\psi^{[N]} \big|_{z=z_{N+1}}
 = (-1)^N\ds\frac{\Wr_\La(\psi_1, \dots, \psi_N, \psi_{N+1})}{\Wr_\La(\La\inv(\psi_1), \dots, \La\inv(\psi_N))} \,.
\end{equation*}
By the inductive assumption, the denominator is given by \eqref{fay3} after shifting $s\mapsto s-1$:
\begin{align*}
\Wr_\La(\La\inv(\psi_1)&, \dots, \La\inv(\psi_N)) 
= z_1\inv\cdots z_N\inv \chi_1\cdots\chi_{N} \prod_{1 \leq i < j \leq N} (z_j\inv - z_i\inv)
		\\
		&\times \frac{\tau(s-N,\t-[z_1\inv] - \cdots - [z_{N}\inv],\bt,\bx)}{\tau(s-N,\t,\bt,\bx)}\,.
\end{align*}
On the other hand, again by \eqref{tau1},
\begin{equation*}
\psi^{[N]}(s,\t,\bt,\bx,z) = \ds\frac{\tau^{[N]}(s,\t-[z\inv],\bt,\bx)}{\tau^{[N]}(s,\t,\bt,\bx)}\chi.
\end{equation*}
Let us plug here the formula \eqref{tauN2} for $\tau^{[N]}$ and set $z=z_{N+1}$. Using
\eqref{xiuz2} as before, we see that
\begin{align*}
\tau^{[N]}(s,\t&-[z_{N+1}\inv],\bt,\bx)
	= V_N \prod_{i=1}^N  (1- z_i z_{N+1}\inv) \, e^{\sum_{i=1}^N\xi(\t, z_i)} \\
	&\quad\times 
	\tau(s-N, \t-[z_1\inv]-\cdots-[z_{N+1}\inv], \bt, \bx).
\end{align*}
Hence,
\begin{align*}
\psi^{[N]} \big|_{z=z_{N+1}}
 &= \chi_{N+1} \prod_{i=1}^N  (1- z_i z_{N+1}\inv) \\
 &\quad\times
 \frac{\tau(s-N, \t-[z_1\inv]-\cdots-[z_{N+1}\inv], \bt, \bx)}{\tau(s-N, \t-[z_1\inv]-\cdots-[z_{N}\inv], \bt, \bx)} \,.
\end{align*}
Comparing the above two expressions for $\psi^{[N]} |_{z=z_{N+1}}$,
we obtain \eqref{fay3} with $N+1$ in place of $N$. This completes the proof of the theorem.
\end{proof}

Similarly, using Remark \ref{minus} and \cite[Theorem 5.3]{LS16}, we can obtain Fay identities with respect to $\bt$ given by
(see \cite{Y19}):
\begin{equation}\label{fay4}
		\begin{split}
		\Wr_\La^+(\psibar_1&, \ldots, \psibar_{N})
		= \chibar_1\cdots\chibar_{N} \prod_{1\leq i < j \leq N}\left(z_j-z_i\right) \\
		&\times\ds\frac{\tau(s+N,\t, \bt+[z_1] + \cdots + [z_{N}],\bx)}{\tau(s,\t,\bt,\bx)} \,,
	\end{split}
	\end{equation}
	where $\bar\psi_i = \bar\psi\rvert_{z=z_i}$, $\bar\chi_i = \bar\chi\rvert_{z=z_i}$ and
	\begin{equation*}
	\Wr_\La^+(f_1, f_2, \dots, f_n) = \begin{vmatrix} f_1 & f_2 & \cdots & f_n \\ \La (f_1) & \La (f_2) & \cdots & \La (f_n) \\ \vdots & \vdots & \cdots &\vdots \\ \La^{n-1}(f_1) & \La^{n-1}(f_2) & \cdots & \La^{n-1}(f_n)\end{vmatrix}.
	\end{equation*}

\section{Conclusion}\label{s5}

In this paper, we proved a bilinear equation for the extended bigraded Toda hierarchy (EBTH), which is equivalent to the bilinear equation of Carlet and van de Leur  \cite{CvdL13} after a change of variables but uses Takasaki's more convenient notation from \cite{Tak10}. From the bilinear equation, we derived difference Fay identities for the EBTH and showed that the action of the Darboux transformations on the wave functions $\psi$, $\psibar$ corresponds to acting on the tau-function by certain vertex operators $\Gamma_+$, $\Gamma_-$. 
As an application, we obtained generalized Fay identities for the EBTH. 

A natural question is to determine explicitly the initial tau-function corresponding to the trivial Lax operator $L = \Lambda^k + \Lambda^{-m}$, from which we can generate other solutions of the EBTH with Darboux transformations.
Wave functions for this Lax operator were given in \cite{Car03, LS16} in the cases $k = m = 1$ and $k = m = 2$, but they correspond to a wave function $\phi$ satisfying $L\phi = (z^k+z^{-m})\phi$, not $L\psi = z^k\psi$. We would like to determine the initial tau-function for the version of the EBTH presented here. 

Another interesting question is whether one can generate a $\mathcal{W}$-algebra from the vertex operators $\Gamma_+$ and $\Gamma_-$, as was done for the KP hierarchy in \cite{AvM92, AvM94,Di95}.  One can construct a Virasoro algebra based on \cite{BW16, DZ04}, but it would be interesting to try to construct a more general $\mathcal{W}$-algebra of symmetries by modifying the vertex operators $\Gamma_+$ and $\Gamma_-$ (cf.\ \cite{BM13, BS16, Mil07}).

We would also like to use our results about Darboux transformations to find solutions to the bispectral problem \cite{DG} for the EBTH (cf.\ \cite{BHY296, BHY596, BHY97}). The bispectral problem was first extended to difference operators in the case of the discrete KP hierarchy in \cite{HI00} and then expanded upon in \cite{GY02}.

%
%
%

\bibliographystyle{amsalpha}

\begin{thebibliography}{99}

\bibitem{ASvM94}
M.~Adler, T.~Shiota, and P.~van Moerbeke, 
\textit{From the $w_\infty$ algebra to its central extension: a $\tau$-function approach}. 
Phys. Lett. A \textbf{194} (1994), 33--43. 



\bibitem{AvM92}
M.~Adler and P.~van Moerbeke, 
\textit{A matrix integral solution to two-dimensional $W_p$-gravity}.
Comm. Math. Phys. \textbf{147} (1992), 25--56.

\bibitem{AvM94}
M.~Adler and P.~van Moerbeke,
\textit{Birkhoff strata, B\"acklund transformations, and regularization of isospectral operators}.
Adv. Math. \textbf{108} (1994), 140--204.



\bibitem{BM13}
B.~Bakalov and  T.~Milanov,
\textit{$\mathcal W$-constraints for the total descendant potential of a simple singularity}.
Compositio Math. \textbf{149} (2013), 840--888.

\bibitem{BHY296}
B.~Bakalov, E.~Horozov, and M.~Yakimov,
\textit{B\"acklund--Darboux transformations in Sato's Grassmannian}.
Serdica Math. J. \textbf{22} (1996), 571--586.

\bibitem{BHY596}
B.~Bakalov, E.~Horozov, and M.~Yakimov,
\textit{General methods for constructing bispectral operators}.
Phys. Lett. A. \textbf{222} (1996), 59--66.

\bibitem{BHY97}
B.~Bakalov, E.~Horozov, and M.~Yakimov
\textit{Bispectral algebras of commuting ordinary differential operators}.
Comm. Math. Phys. \textbf{190} (1997), 331--373.

\bibitem{BS16}
B.~Bakalov and M.~Sullivan,
\textit{Twisted logarithmic modules of lattice vertex algebras.}
Trans. Amer. Math. Soc. (2018), published online, 33 \textit{pp}.;
\texttt{https://doi-org.prox.lib.ncsu.edu/10.1090/tran/7703}.

\bibitem{BW16}
B.~Bakalov and W.~Wheeless,
\textit{Additional symmetries of the extended bigraded Toda hierarchy}.
J. Phys. A \textbf{49} (2016), 055201, 25 \textit{pp}.

\bibitem{BGSZ15}
J.~Bergen, M.~Giesbrecht, P.N.~Shivakumar, and Y.~Zhang,
\textit{Factorizations for difference operators}. 
Adv. Difference Equ. \textbf{2015}, 2015:57, 6 \textit{pp}. 

\bibitem{Car03}
G.~Carlet,
\textit{Extended Toda hierarchy and its Hamiltonian structure}.
Ph.D. thesis (2003), SISSA;
\texttt{http://gcarlet.droppages.com/pdf/phdth.pdf}.

\bibitem{Car06}
G.~Carlet,
\textit{The extended bigraded Toda hierarchy}. 
J. Phys. A \textbf{39} (2006), 9411--9435.

\bibitem{CDZ04}
G.~Carlet, B.~Dubrovin, and Y.~Zhang,
\textit{The extended Toda hierarchy}.
Mosc. Math. J. \textbf{4} (2004), 313--332.

\bibitem{CvdL13}
G.~Carlet and J.~van de Leur, 
\textit{Hirota equations for the extended bigraded Toda hierarchy and the total descendent potential of\/ $\CC P^1$ orbifolds}. 
J. Phys. A \textbf{46} (2013), no. 40, 405205, 16 \textit{pp}.

\bibitem{Di95}
L.A.~Dickey, 
\textit{On additional symmetries of the KP hierarchy and Sato's B\"acklund transformation}. 
Comm. Math. Phys. \textbf{167} (1995), 227--233.




\bibitem{DZ98}
B.~Dubrovin and Y.~Zhang,
\textit{Extended affine Weyl groups and Frobenius manifolds}. 
Compositio Math. \textbf{111} (1998), 167--219.


\bibitem{DZ04}
B.~Dubrovin and Y.~Zhang,
\textit{Virasoro symmetries of the extended Toda hierarchy}.
Comm. Math. Phys. \textbf{250} (2004), 161--193.

\bibitem{DG}
J.J.~Duistermaat and F.A.~Gr\"unbaum,
\textit{Differential equations in the spectral parameter}.
Comm. Math. Phys. \textbf{103} (1986), 177--240.




\bibitem{Ge01}
E. Getzler, 
\textit{The Toda conjecture}. 
In: ``Symplectic geometry and mirror symmetry" (Seoul, 2000), 
51--79, World Sci., River Edge, NJ, 2001.


\bibitem{GY02}
F.A.~Gr\"unbaum and M.~Yakimov,
\textit{Discrete bispectral Darboux transformations from Jacobi operators}.
Pacific J. Math. \textbf{204} (2002), no. 2, 395--431.

\bibitem{GHZ}
C.~Gu, H.~Hu, and Z.~Zhou,
\textit{Darboux transformations in integrable systems}.
Mathematical Physics Studies, 26. Springer, Dordecht, 2005. 

\bibitem{KacI}
V.-G.~Kac,
\textit{Infinite dimensional Lie algebras},
third edition. Cambridge University Press, Cambridge, 1990. 



\bibitem{HI00}
L.~Haine and P.~Iliev,
\textit{Commutative rings of difference operators and an adelic flag manifold}.
Internat. Math. Res. Notices \textbf{2000}, no. 6, 261--323.

\bibitem{LH10}
C.~Li, J.~He, K.~Wu, and Y.~Cheng,
\textit{Tau function and Hirota bilinear equations for the extended bigraded Toda hierarchy}.
J. Math. Phys. \textbf{51} (2010), no. 4, 043515, 32 \textit{pp.}

\bibitem{LS16}
C.~Li and T.~Song,
\textit{Multifold Darboux transformations of the extended bigraded Toda hierarchy}. 
Z. Naturforschung A. \textbf{71} (2016), no. 4, 357--371.



\bibitem{MZ97}
F.~Magri and J.P.~Zubelli,
\textit{Bi-Hamiltonian formalism and the Darboux--Crum method. I. From the KP to mKP hierarchy}.
Inverse Problems \textbf{13} (1997), 755--780.

\bibitem{MS}
V.B.~Matveev and M.A.~Salle,
\textit{Darboux Transformations and solitons}.
Springer Series in Nonlinear Dynamics. 
Springer-Verlag, Berline, 1991. 

\bibitem{Mil06}
T.~Milanov, 
\textit{Gromov--Witten theory of\/ $\CC P^1$ and integrable hierarchies}.
Preprint (2006);
\texttt{https://arxiv.org/pdf/math-ph/0605001.pdf}.

\bibitem{Mil07}
T.~Milanov, 
\textit{Hirota quadratic equations for the extended Toda hierarchy}. 
Duke Math. J. \textbf{138} (2007), 161--178. 


\bibitem{MT08}
T.~Milanov and H.H.~Tseng, 
\textit{The spaces of Laurent polynomials, Gromov--Witten theory of\/ $\mathbb{P}^1$-orbifolds, and integrable hierarchies}.
J. Reine Angew. Math. \textbf{622} (2008), 189--235.


\bibitem{Og08}
Y.~Ogawa,
\textit{On the (2+1)-dimensional extension of 1-dimensional Toda lattice hierarchy}. 
J. Nonlinear Math. Phys. \textbf{15} (2008), no. 1, 48--65.

\bibitem{OP06}
A.~Okounkov and R.~Pandharipande,
\textit{The equivariant Gromov--Witten theory of\/ $\mathbb{P}^1$}. 
Ann. of Math. (2) \textbf{163} (2006), 561--605.


\bibitem{Tak08}
K.~Takasaki,
\textit{Differential Fay identities and auxiliary linear problem of integrable hierarchies}.
Exploring new structures and natural constructions in mathematical physics, 387--441,
Adv. Stud. Pure Math., \textbf{61}, Math. Soc. Japan, Tokyo, 2011.

\bibitem{Tak10}
K.~Takasaki,  
\textit{Two extensions of 1D Toda hierarchy}. 
J. Phys. A \textbf{43} (2010), no. 43, 434032, 15 \textit{pp.}


\bibitem{TT95}
K.~Takasaki and T.~Takebe,
\textit{Integrable hierarchies and dispersionless limit}.
Rev. Math. Phys. \textbf{7} (1995), no. 5, 743--808.

\bibitem{Teo06}
L.P.~Teo,
\textit{Fay-like identities of the Toda lattice hierarchy and its dispersionless limit}.
Rev. Math. Phys. \textbf{18}, no. 10, 1055--1073.

\bibitem{UT84}
K.~Ueno and K.~Takasaki,  
\textit{Toda lattice hierarchy}. 
In: ``Group representations and systems of differential equations" (Tokyo, 1982), 
1--95, Adv. Stud. Pure Math., 4, North-Holland, Amsterdam, 1984.



\bibitem{Y19}
A.~Yadavalli,
\textit{Darboux transformations and Fay identities for the extended bigraded Toda hierarchy}.
Ph.D. thesis (2019), NC State University.

\bibitem{Z02}
Y.~Zhang,
\textit{On the $CP^1$ topological sigma model and the Toda lattice hierarchy}. 
J. Geom. Phys. \textbf{40} (2002), 215--232.

\end{thebibliography}

\end{document}